\documentclass{article} 
\usepackage{iclr2027_conference,times}

\usepackage{amsmath,amsfonts,bm}

\def\eqref#1{equation~\ref{#1}}

\def\1{\bm{1}}

\DeclareMathAlphabet{\mathsfit}{\encodingdefault}{\sfdefault}{m}{sl}
\SetMathAlphabet{\mathsfit}{bold}{\encodingdefault}{\sfdefault}{bx}{n}

\newcommand{\E}{\mathbb{E}}

\usepackage[utf8]{inputenc}
\usepackage[T1]{fontenc}
\usepackage{amsmath,amssymb,amsfonts}
\usepackage{booktabs}
\usepackage{array}
\usepackage{multirow}
\usepackage{pifont}
\usepackage{graphicx}
\usepackage{xcolor}
\usepackage{tikz}
\usetikzlibrary{arrows.meta,positioning,calc,decorations.pathreplacing}
\usepackage{url}
\usepackage{hyperref}
\usepackage{amsthm}
\newtheorem{theorem}{Theorem}[section]

\newtheorem{lemma}[theorem]{Lemma}
\theoremstyle{remark}
\newtheorem{remark}[theorem]{Remark}

\newcommand{\method}{\texorpdfstring{\texttt{xPRT}}{xPRT}}   

\newcommand{\Lxp}{\mathcal{L}_{\text{xp}}}          
\newcommand{\Rcross}{\mathcal{L}_{\text{cross}}}    
\newcommand{\Rself}{\mathcal{L}_{\text{self}}}      
\newcommand{\Lcross}{\Rcross}
\newcommand{\Lself}{\Rself}
\newcommand{\Lmlm}{\mathcal{L}_{\text{MLM}}}
\newcommand{\Lunif}{\mathcal{L}_{\text{unif}}}
\newcommand{\Lret}{\mathcal{L}_{\text{ret}}}
\newcommand{\sg}{\operatorname{sg}}

\newcommand{\xPRT}{\texttt{xPRT}}

\title{\xPRT: Cross-Passage Retrieval Pretraining}
\author{Danush Khanna\thanks{This work was done during an internship at Microsoft Research.} , Himanshu Singhal, Pranjal A. Chitale \& Bishal Santra \\
Microsoft Research \\
\texttt{danush.research@gmail.com} \\
\texttt{\{t-hsinghal,pchitale,bishalsantra\}@microsoft.com}
}

\iclrfinalcopy 
\begin{document}

\maketitle
\lhead{Preprint} 

\begin{abstract}
Dense retrievers are a foundational component underlying modern RAG and search systems. Training these models typically requires a curated collection of annotated query-document pairs, which are prohibitively expensive to obtain outside a handful of well-resourced domains. This has motivated a line of work that leverages large-scale unsupervised retrieval pretraining from unlabelled text alone. Most such methods learn by optimizing a language modelling or contrastive learning objective over\textit{ short spans within a document}. This raises the question: \textit{how can we extract richer learning signals from a single document for retrieval pretraining?} To address it, we introduce \textbf{\xPRT} (Cross-Passage Retrieval Pretraining), an unsupervised framework that learns from the key information distributed across an entire document and the connections between its parts. We evaluate \xPRT\ against five state-of-the-art objectives on MTEB English v2, continuing pretraining from a public unsupervised checkpoint while holding the corpus, schedule, and training procedure fixed. Across diverse tasks spanning all task types, we observe that \xPRT\ consistently outperforms the strongest baseline by $\mathbf{3.2\%}$ on an average. Notably, in this continual pretraining setting, many baseline methods incur substantial performance regressions, whereas our method delivers the largest complementary gains over the initial pretrained checkpoint. Controlled ablations confirm that \xPRT's performance gains stem from its proposed cross-passage signals. \xPRT\ demonstrates that a single unlabelled document is a much richer source of retrieval signal than prior work assumed, and offers \textbf{complementary gains} to what current retrievers already capture.
\end{abstract}

\section{Introduction}
\label{sec:intro}

Dense retrievers are a core component of modern search, open-domain question answering and
retrieval-augmented generation (RAG). A dual encoder maps queries and passages to vectors in a
shared space and retrieves candidates from a prebuilt index by approximate nearest-neighbour
search \citep{karpukhin2020dpr, reimers2019sbert, zhao2024survey}. In RAG, a language model
answers from the passages it is given \citep{lewis2020rag}, so a passage the retriever misses
cannot be recovered later in the pipeline. Since the encoder is the whole retriever, retrieval
quality rests on how well a single pooled vector stands in for a passage. Learning that vector
has traditionally relied on costly supervision: large collections of annotated query-passage pairs, mined hard
negatives and cross-encoder filtering \citep{bajaj2016msmarco, xiong2021ance, qu2021rocketqa}.
Such data exist for a handful of well-resourced domains and are prohibitively expensive to curate
elsewhere.

This cost motivates learning from unlabelled text. Pre-trained language models such as BERT
\citep{devlin2019bert}, RoBERTa \citep{liu2019roberta} and T5 \citep{raffel2020t5} are the
standard initialization for dense retrievers \citep{karpukhin2020dpr, xiong2021ance}. However, they are
pre-trained with token-level supervision, and the
pooled vector never learns to represent a passage as a whole, in a manner that is suitable for retrieval
\citep{chang2020pretraining, reimers2019sbert}, and thus labelled data must close the gap.

\emph{Retrieval-oriented pre-training} aims to close part of this gap without labels, by training the
pooled vector directly on unlabelled text.
\textbf{Contrastive} methods build pseudo-relevant pairs from the corpus, such as an inverse-cloze
pair or two crops of the same document, and train the pooled vector to prefer its counterpart over
passages from other documents \citep{lee2019ict, izacard2022contriever, ram2022spider,
gao2022cocondenser}. \textbf{Bottleneck auto-encoding} methods instead train the pooled vector to
summarize its passage \citep{gao2021condenser, xiao2022retromae, wang2023simlm,
ma2024bowdpr, li2023challenging, shen2023lexmae}. The encoder compresses a passage into the vector, and a decoder must reconstruct
the passage from it, so the vector is pushed to retain the passage's content. 
\begin{figure}[!t]
  \centering
  \resizebox{\linewidth}{!}{
\providecommand{\msk}{\textcolor{black!40}{\rule[-0.2ex]{0.85em}{1.3ex}}}
\begin{tikzpicture}[
  font=\sffamily\scriptsize,
  bar/.style={draw=black!18, fill=black!7, rounded corners=0.6pt, minimum width=0.5cm, minimum height=0.15cm, inner sep=0pt},
  hb/.style={bar, minimum height=0.24cm, font=\sffamily\tiny\bfseries},
  txt/.style={rounded corners=2pt, align=left, inner sep=3pt, line width=0.55pt, anchor=north west},
  vec/.style={draw=black!70, fill=white, circle, minimum size=0.5cm, inner sep=0pt},
  arr/.style={-{Stealth[length=1.6mm]}, semithick},
  ttl/.style={font=\sffamily\footnotesize\bfseries, anchor=north},
  sub/.style={font=\sffamily\scriptsize, text=black!55, anchor=north},
  lrn/.style={align=center, anchor=center},
  res/.style={rounded corners=2pt, align=center, inner sep=2.5pt, line width=0.5pt, anchor=south, minimum height=1.0cm},
  note/.style={font=\sffamily\tiny, text=black!60, align=left},
  page/.style={black!40, fill=black!2, rounded corners=2pt},
]

\begin{scope}
  \node[ttl] at (2.15,6.9) {(a) Contrastive pre-training};
  \node[sub] at (2.15,6.52) {e.g., Contriever, E5};
  \draw[page] (0.05,1.78) rectangle (0.75,6.1);
  \foreach \i in {0,...,9} \node[bar] (pa\i) at (0.4,{5.9-0.38*\i}) {};
  \node[text=black!50] at (0.4,2.12) {$\vdots$};
  \node[hb, draw=blue!65!black, fill=blue!35, text=white] at (pa0) {0};
  \node[hb, draw=blue!65!black, fill=blue!35, text=white] at (pa5) {5};

  \node[txt, draw=blue!65!black, fill=blue!4, text width=3.1cm] (ta0) at (0.95,6.1) {%
    {\bfseries\color{blue!55!black} passage 0}\hfill{\color{black!55} random crop}\\
    {\rmfamily dictionary and translator for handheld new: sensagent \ldots}};
  \node[txt, draw=blue!65!black, fill=blue!4, text width=3.1cm] (ta5) at (0.95,4.1) {%
    {\bfseries\color{blue!55!black} passage 5}\hfill{\color{black!55} random crop}\\
    {\rmfamily \ldots a perfect game is by definition also a no-hitter \ldots}};
  \draw[{Stealth[length=1.5mm]}-{Stealth[length=1.5mm]}, thick, blue!60!black]
    ($(ta0.south west)!0.14!(ta0.south east)$) -- ($(ta5.north west)!0.14!(ta5.north east)$)
    node[midway, right=3pt, note, text=black!70] {positive pair: pull together\\(E5 cosine 0.45)};
  \node[note, anchor=north west] at (0.95,2.85) {{\color{black!35}\rule{1.1ex}{1.1ex}\,\rule{1.1ex}{1.1ex}\,\rule{1.1ex}{1.1ex}}\ pushed away from passages\\of other pages (negatives)};

  \node[lrn, text width=4.1cm] at (2.15,1.42) {learns to \textbf{tell pairs apart}\\{\color{black!55}2 passages per example}};
  \node[res, draw=black!45, fill=black!5, text width=4.05cm] at (2.15,0.02) {{\tiny\color{black!60}E5, the initial model}\\MTEB avg.\ {\small\bfseries 53.06}};
\end{scope}

\begin{scope}[xshift=4.6cm]
  \node[ttl] at (2.15,6.9) {(b) Bottleneck auto-encoding};
  \node[sub] at (2.15,6.52) {e.g., RetroMAE, DupMAE, SimLM};
  \draw[page] (0.05,1.78) rectangle (0.75,6.1);
  \foreach \i in {0,...,9} \node[bar] (pb\i) at (0.4,{5.9-0.38*\i}) {};
  \node[text=black!50] at (0.4,2.12) {$\vdots$};
  \node[hb, draw=violet!70!black, fill=violet!35, text=white] at (pb5) {5};

  \node[txt, draw=violet!70!black, fill=violet!4, text width=3.1cm] (tb5) at (0.95,6.1) {%
    {\bfseries\color{violet!60!black} passage 5}\\
    {\rmfamily \ldots a perfect game is by definition also a no-hitter \ldots}};
  \node[vec] (zb) at (1.45,4.62) {$\mathbf{z}$};
  \draw[arr] (tb5.south -| zb) -- (zb);
  \node[note, anchor=west] at ($(zb.east)+(0.12,0)$) {encode, then reconstruct\\the same passage};
  \node[txt, draw=violet!70!black, dashed, fill=violet!2, text width=3.1cm] (tb5r) at (0.95,4.1) {%
    {\bfseries\color{violet!60!black} passage 5 again}\hfill{\color{black!55} from $\mathbf{z}$}\\
    {\rmfamily \ldots a perfect \msk{} is by \msk{} also a no-\msk{}. 22 perfect \msk{} have been \msk{} \ldots}};
  \draw[arr, violet!70!black] (zb) -- (tb5r.north -| zb);
  \node[note, anchor=north west] at (0.95,2.72) {predicts nothing it has not seen;\\no other passage takes part};

  \node[lrn, text width=4.1cm] at (2.15,1.42) {learns \textbf{what one passage contains}\\{\color{black!55}1 passage per example}};
  \node[res, draw=black!45, fill=black!5, text width=4.05cm] at (2.15,0.02) {{\tiny\color{black!60}E5 + RetroMAE, DupMAE,\\SimLM or BoW-DPR}\\MTEB avg.\ {\small\bfseries 51.01--53.14}};
\end{scope}

\begin{scope}[xshift=9.2cm]
  \node[ttl] at (2.8,6.9) {(c) Cross-passage prediction};
  \node[sub] at (2.8,6.52) {\method{} (ours)};
  \draw[page] (0.05,1.78) rectangle (0.75,6.1);
  \foreach \i in {0,...,9} \node[bar] (pc\i) at (0.4,{5.9-0.38*\i}) {};
  \node[text=black!50] at (0.4,2.12) {$\vdots$};
  \node[hb, draw=red!60!black, fill=red!40, text=white] at (pc0) {0};
  \node[hb, draw=green!50!black, fill=green!40] at (pc3) {3};
  \node[hb, draw=teal!70!black, fill=teal!55, text=white] at (pc5) {5};

  \node[txt, draw=teal!70!black, fill=teal!5, text width=4.4cm] (tc5) at (0.95,6.1) {%
    {\bfseries\color{teal!55!black} anchor: passage 5}\hfill{\color{black!55} centrality 0.94}\\
    {\rmfamily \ldots a perfect game is by definition also a no-hitter \ldots}\\
    {\tiny\color{black!55} one of up to 10 anchors per page}};
  \node[vec] (zc) at (1.35,4.3) {$\mathbf{z}$};
  \draw[arr] (tc5.south -| zc) -- (zc);
  \node[txt, draw=green!50!black, fill=green!4, text width=3.78cm] (tc3) at (1.78,4.62) {%
    {\bfseries\color{green!40!black} related: passage 3}\hfill{\color{black!55} cos 0.95}\\
    {\rmfamily \ldots only \msk{} have been thrown in major league \msk{} history \ldots}\\
    {\color{green!40!black}\bfseries\ding{51}\ sibling:} new facts, same topic};
  \node[txt, draw=red!60!black, fill=red!3, text width=3.6cm] (tc0) at (1.8,3.1) {%
    {\bfseries\color{red!55!black} unrelated: passage 0}\hfill{\color{black!55} cos 0.45}\\
    {\rmfamily dictionary and translator \ldots}\\
    {\color{red!55!black}\bfseries\ding{55}\ excluded:} never paired};
  \draw[arr, green!45!black] (zc) -- (zc -| tc3.west);
  \draw[arr, red!60!black, dashed] (zc.south) |- (tc0.west);

  \node[lrn, text width=5.4cm] at (2.8,1.42) {learns \textbf{how passages relate}\\{\color{black!55}aligns with and predicts related passages}};
  \node[res, draw=green!45!black, fill=green!7, line width=0.8pt, text width=5.3cm] at (2.8,0.02) {{\tiny\color{black!60}E5 + \method{}: better than E5 on 34 of the 41 tasks}\\MTEB avg.\ {\small\bfseries\color{green!35!black} 54.86}\ {\color{green!35!black}($+1.80$)}};
\end{scope}
\end{tikzpicture}}

\caption{\textbf{How pre-training objectives use a document's passages}, on one FineWeb page.
  (a) Contrastive pre-training separates a passage pair from other pages'; a random pair can wrongly link an article to a dictionary widget.
  (b) Bottleneck auto-encoding reconstructs a passage from its own embedding $\mathbf{z}$, so it need not anticipate unseen content; 
  (c) \xPRT\ makes an anchor's embedding predict its four most related passages, adding new facts on the query--answer relation a retriever needs; unrelated passages (the widget) are neither anchors nor targets.
  Bottom: average over 41 MTEB(eng, v2) tasks for E5 and E5 after continued pre-training on 100K documents (Table~\ref{tab:main}).
  }
  \label{fig:motivation}
\end{figure}
Both families, however, make some strong assumptions for sourcing the learning signal that might contradict with what a good retriever needs. Auto-encoders reconstruct a passage
from itself, so the pooled vector is never asked to anticipate content it has not read, which is
exactly what a query asks of the passage that answers it. Contrastive methods look beyond the
span, but pair it with a second span chosen by position or at random. They reduce the relation
between the two to a single label, same document or not, applied with equal weight whether the
spans share a subject or only a URL. On web text, a navigation block is as likely to be drawn as
the paragraph that states the topic \citep{penedo2024fineweb}, and nothing else in the document
answers a query drawn from boilerplate. CoT-MAE \citep{wu2023cotmae} does look outside the
passage, but supervises only a single adjacent context span. Two document-level signals thus go
unused: \emph{which} passages carry a document's subject, and \emph{how} those passages relate to
one another. This raises our central question: \emph{how much more retrieval signal can a single
unlabelled document supply?}

We propose \xPRT\ (Cross-Passage Retrieval Pre-training), a self-supervised objective that
exploits both signals using the document alone. Content, not position, decides what a training
example contains. A passage becomes an \emph{anchor} only if its embedding under the frozen
initial encoder lies close to the centroid of its document, which keeps boilerplate out of the
query side. The anchor's \emph{siblings} are the passages of the same document most related to it
under fused dense and lexical similarity (\S\ref{sec:selection}). Supervision then runs
\emph{across} passages rather than within one. A contrastive loss makes the anchor single out its
most related sibling among passages of other documents (\S\ref{sec:xp}). A lightweight decoder that
sees the anchor only through its embedding predicts the masked tokens of sibling passages the encoder
never read (\S\ref{sec:pred}). This mirrors the relation retrieval needs between a query and the
passage that answers it. 

We continue pre-training the public \texttt{e5-base-unsupervised} checkpoint \citep{wang2022e5} on
FineWeb \citep{penedo2024fineweb} under each objective. We hold the backbone, corpus, schedule,
batch size, passage length, retention regularizer and seeds fixed, and evaluate every checkpoint
under one protocol without fine-tuning, so that the objective is the only variable
(\S\ref{sec:results}). On MTEB(eng, v2) \citep{muennighoff2023mteb, enevoldsen2025mmteb}, \xPRT\
outperforms the strongest of five baselines across three data scales experimented with, by $\mathbf{3.2\%}$ on average over 41 tasks and by
$\mathbf{6.1\%}$ nDCG@10 on the ten retrieval tasks.

Our contributions are:

\begin{itemize}
\item We identify two document-level signals that unsupervised retrieval objectives leave unused:
which passages represent a document, and how its passages relate. We propose \xPRT, which learns
from both by making the embedding of a representative passage predict the related passages it
never read (\S\ref{sec:method}).

\item In a controlled continual pre-training comparison against five established objectives,
\xPRT\ is the only one that improves its starting checkpoint substantially. It gains 1.8 to 1.9
points on MTEB(eng, v2), on every task type, at every corpus scale from 100K to 1M documents. No
baseline gains more than 0.4 points at any scale, and four of the five end below the checkpoint
at 100K (\S\ref{sec:main} and \S\ref{sec:breadth}).

\item Controlled ablations trace the gain to the cross-passage signal from curated related
passages. Removing cross-passage prediction and pairing each anchor with a random passage of its
document costs $9.3$ nDCG@10. Within-passage even given as much text, or up to five times the
steps, stays between $18.8$ - $19.2$ nDCG@10, below the strongest baseline (\S\ref{sec:ablation}).
\end{itemize}





\section{Cross-Passage Retrieval Pre-Training}\label{sec:method}

Established retrieval pre-training objectives learn a passage's embedding from the passage itself or from heuristic pairs, such as random crops or adjacent spans. \method{} instead learns from \emph{cross-passage signals}: the embedding of a passage that represents its document must single out, and predict the content of, the passages of the same document that are most related to it or complements it, e.g. answer a question posed by the first passage. The encoder $f_\theta$, a 12-layer Transformer, is later deployed as the retriever. Documents are split into passages of 128 tokens, and each training example is a \emph{group} $g=(a,s_1,\dots,s_n)$: an anchor passage $a$ and $n=4$ sibling passages of the same document, ranked by their relatedness to the anchor (\S\ref{sec:selection}). For a passage $x$, $e_x$ is the mean of the encoder's final hidden states, the embedding used for retrieval. During training, the encoder reads a light view $m_L(x)$ of every passage, with 15\% of the tokens replaced by [MASK], and a one-layer decoder $h_\phi$, discarded after pre-training, reads a heavy view $m_H(x)$, with 60\% masked at the positions $M_H(x)$, a superset of the light mask (Appendix~\ref{app:impl}). The objective combines two cross-passage terms, which carry this signal, with a self-prediction term and two standard regularizers:

\begin{equation}
  \mathcal{L}_{\text{\method{}}}
  = \underbrace{\Lxp}_{\text{contrastive xp signal (\S\ref{sec:xp})}}
  + \underbrace{\lambda_c\Lcross + \Lself}_{\text{content supervision signal(\S\ref{sec:pred})}}
  + \underbrace{\Lmlm + \lambda_u\Lunif}_{\text{regularizers (\S\ref{sec:pred})}} ,
  \label{eq:objective}
\end{equation}
with $\lambda_c = 0.1$, $\lambda_u = 0.01$, and all other weights 1 (Table ~\ref{tab:hparams}). Figure~\ref{fig:method} shows how terms are computed for a group.
\begin{figure}[t]

  \centering
  \resizebox{\linewidth}{!}{\resizebox{\linewidth}{!}{%
\begin{tikzpicture}[
  font=\sffamily\scriptsize,
  box/.style={draw, rounded corners=2pt, align=center, inner sep=2.5pt},
  txt/.style={box, text width=2.8cm, align=left, font=\sffamily\tiny, inner sep=3pt},
  enc/.style={box, draw=blue!60!black, fill=blue!7, minimum width=1.9cm, minimum height=0.9cm},
  aux/.style={box, draw=orange!80!black, fill=orange!8, dashed, minimum width=2.5cm, minimum height=0.85cm},
  outbox/.style={box, draw=black!45, fill=white, minimum width=2.2cm, minimum height=0.75cm},
  mainloss/.style={box, draw=red!65!black, fill=red!10, line width=0.9pt, minimum height=0.75cm},
  reg/.style={box, draw=red!55!black, fill=red!3, dashed, minimum height=0.75cm},
  filt/.style={box, draw=black!45, fill=black!5, font=\sffamily\tiny, inner sep=1.5pt},
  arr/.style={-{Stealth[length=1.6mm]}, semithick},
  lbl/.style={font=\sffamily\tiny, inner sep=1.5pt},
  bar/.style={draw=black!25, fill=black!8, rounded corners=0.5pt, minimum width=0.6cm, minimum height=0.27cm, inner sep=0pt, font=\sffamily\tiny},
]
\node[font=\sffamily\scriptsize\bfseries] at (0.45,3.22) {document $d$};
\draw[black!40, fill=black!2, rounded corners=2pt] (0.05,-1.5) rectangle (0.85,2.98);
\foreach \i in {0,...,9} \node[bar] (d\i) at (0.45,{2.75-0.46*\i}) {\color{black!50}\i};
\node[bar, draw=teal!70!black, fill=teal!55, text=white] at (d1) {\bfseries $a$};
\node[bar, draw=orange!80!black, fill=orange!50] at (d2) {$s_1$};
\node[bar, draw=orange!80!black, fill=orange!28] at (d6) {$s_2$};
\node[bar, draw=orange!80!black, fill=orange!28] at (d7) {$s_3$};
\node[bar, draw=orange!80!black, fill=orange!28] at (d4) {$s_4$};
\node[filt] (fa) at (1.42,2.29) {filter};
\node[filt] (fs) at (1.42,-0.75) {filter};
\draw[semithick] (d1.east) -- (fa.west);
\draw[semithick] (0.85,-0.75) -- (fs.west);
\draw[arr, dashed, teal!60!black] (fa.south) -- node[right, lbl, text=teal!45!black, align=left] {given\\ $a$} (fs.north);
\node[txt, draw=teal!70!black, fill=teal!6, anchor=west] (a) at (1.95,2.29)
  {\textbf{anchor $a$} (passage 1 of 10)\\[1pt] \texttt{query:} Chrome OS syncs your settings through your Google account and keeps every web page in a sandbox \dots};
\node[txt, draw=orange!80!black, fill=orange!6, anchor=west] (s) at (1.95,-0.75)
  {\textbf{top sibling $s_1$} (passage 2)\\[1pt] \texttt{passage:} \dots the OS is constantly updated with security patches, though some researchers warn that Google only moves the goalposts for scammers \dots};
\draw[arr] (fa.east) -- (a.west);
\draw[arr] (fs.east) -- (s.west);
\node[enc] (ea) at (6.75,2.29) {encoder $f_\theta$\\ \textit{(deployed retriever)}};
\node[enc] (es) at (6.75,-0.75) {encoder $f_\theta$\\ (same weights)};
\draw[arr] (a.east) -- node[above, lbl, align=center] {mask\\ 15\%} (ea.west);
\draw[arr] (s.east) -- node[above, lbl, align=center] {mask\\ 15\%} (es.west);
\node[outbox] (oa) at (9.25,2.29) {$e(a)$: mean pool};
\node[outbox] (os) at (9.25,-0.75) {$e(s_1)$: mean pool};
\draw[arr] (ea) -- (oa);
\draw[arr] (es) -- (os);
\node[mainloss] (Lxp) at (9.25,0.77) {\textbf{main loss} $\mathcal{L}_{\text{xp}}$: align $e(a)$ with $e(s_1)$\\ against passages of other documents};
\draw[arr] (oa) -- (Lxp);
\draw[arr] (os) -- (Lxp);
\node[aux] (U) at (12.75,2.29) {conditioning vectors\\ $U(a)=We(a)+b$ \ (4)};
\node[aux] (dec) at (12.75,-2.2) {decoder $h_\phi$ (1 layer)\\ input: [CLS], $U(a)$, $\hat{s}_1$};
\node[reg] (R) at (9.25,-2.2) {\textbf{regularizer} $\mathcal{L}_{\text{cross}}$:\\ MLM loss on masked tokens};
\draw[arr] (oa) -- (U);
\draw[arr] (U) -- (dec);
\draw[arr] (dec) -- node[above, lbl] {$q_t$} (R);
\draw[arr] ($(s.south)+(1.1,0)$) |- node[pos=0.72, above, lbl] {actual tokens $s_{1,t}$} (R.west);
\draw[arr] (s.south) |- node[pos=0.72, below, lbl] {mask 60\% (contains the 15\% mask)} ($(dec.south)+(0,-0.72)$) -- (dec.south);
\node[box, draw=black!30, fill=black!2, text width=13.65cm, align=left, font=\sffamily\tiny, anchor=north west] at (0.05,-3.72)
  {\textbf{Also computed for this group.}
   $\mathcal{R}_{\text{cross}}$ repeats for $s_2,s_3,s_4$ and in reverse, with $U(s_j)$ predicting $a$.
   $\mathcal{R}_{\text{self}}$: the same decoder predicts $a$ from $U(a)$ and $s_1$ from $U(s_1)$, matching the encoder's own token distributions (KL).
   $\mathcal{L}_{\text{MLM}}$ is applied to the light views and $\mathcal{L}_{\text{unif}}$ to $e(\cdot)$.
   Solid blue: kept after pre-training. Dashed orange: discarded. Red: losses.};
\end{tikzpicture}%
}}
  \caption{\textbf{\method{} training.} \emph{Left:} From a document, first a group of anchor and n sibling crops are selected. Anchors are passages whose embedding is close to the embedding of the full document; the siblings of an anchor are the four other passages most related to it by a fused dense and TF-IDF similarity metric, so the anchor determines its siblings. \emph{Right:} the encoder, the only model we keep, encodes the anchor/sibling passages with light (~15\%) masking. The cross-passage loss $\Lxp$ aligns the embeddings of the anchor and its top sibling against anchor/sibling passages of other documents. The content supervision loss $\Rcross$ asks a one-layer decoder, which receives the anchor's embedding as four conditioning vectors $U(a)$, to recover tokens from its heavily masked siblings. Excerpts are paraphrased from a real group of a 10-passage document.}
  \label{fig:method}
\end{figure}

\subsection{Selecting related passages}\label{sec:selection}

A sibling is useful only if it is related to its anchor, and an anchor only if it represents its document rather than, say, navigation text or a widget (Figure~\ref{fig:motivation}). We select groups once, before training and for both training regimes, with frozen E5 \citep{wang2022e5} as a dense scorer and inverse document frequency (IDF) as a lexical one \citep{sparckjones1972idf}. An anchor must be \emph{on-topic}: the cosine similarity between its embedding and the centroid of its document's passage embeddings must exceed a threshold $\tau_a$ (Appendix~\ref{app:selection}). Among the eligible passages, we keep up to ten anchors per document, preferring those that are both central and lexically informative, as combined by reciprocal rank fusion (RRF; \citealp{cormack2009rrf}). For each anchor, every other passage of the document is a candidate sibling; we rank the candidates by fusing their dense and TF-IDF similarity to the anchor, again with RRF, and keep the top four. Siblings therefore come from anywhere in the document rather than from adjacent spans \citep{wu2023cotmae}, and every document contributes several groups: on the 100K corpus, 99\% of documents yield groups (7.0 per document on average), which cover 68\% of all passages.

\subsection{Cross-Passage Contrastive Signals}\label{sec:xp}

The main objective, $\Lxp$, asks the embedding of an anchor to single out its most related sibling among passages of other documents. A symmetric InfoNCE loss \citep{oord2018cpc} with cosine similarity $\cos$ and temperature $\tau = 0.05$ aligns each anchor with its top-ranked sibling $s_1$:
\begin{equation}
  \Lxp = -\frac12\left[
  \log\frac{\exp\big(\cos(e_a,e_{s_1})/\tau\big)}{\sum_{g'}\exp\big(\cos(e_a,e_{s'_1})/\tau\big)}
  + \log\frac{\exp\big(\cos(e_{s_1},e_a)/\tau\big)}{\sum_{g'}\exp\big(\cos(e_{s_1},e_{a'})/\tau\big)}\right],
  \label{eq:xp}
\end{equation}
where $(a',s'_1)$ is the anchor and top sibling of group $g'$, and $g'$ ranges over the groups of the global batch except the other groups of the same document, which would be false negatives. The loss has the form of contrastive pre-training \citep{izacard2022contriever}; what \method{} changes is the positive. Contrastive methods pair a passage with a random crop, an adjacent span or a structural counterpart such as a title, whether or not the two texts are related. \method{} pairs a representative anchor with the passage of its document that is most related to it, so that the embedding must capture the relation between two different passages on one topic -- the relation between a query and a passage that answers it.

\subsection{Cross-Passage Content Supervision Signals}\label{sec:pred}

$\Lxp$ uses one sibling per anchor and constrains only embedding geometry. Two prediction terms supervise \emph{content}: an anchor's embedding should encode its related passages, and every embedding its own passage. Both use the decoder $h_\phi$, one freshly initialized Transformer layer sharing the encoder's token embeddings and MLM head. A linear map projects the conditioning embedding $e_x$ to $U(x)\in\mathbb{R}^{4\times d}$, inserted after [CLS] in the heavy view $m_H(y)$ of a target passage $y$, and $h_\phi$ predicts $q_t(\cdot\mid m_H(y),U(x))$ at each masked position $t\in M_H(y)$. One layer cannot recover 60\% of $y$ from the rest, so it must draw on $U(x)$.

$\Lcross$, the prediction term that \method{} introduces, asks the embedding of one passage to predict the masked tokens of another: those of each sibling from the anchor, and those of the anchor from each sibling. With the masked-language-modeling cross-entropy on the actual tokens $y_t$,
\begin{equation}
  C(y\mid x) = -\frac{1}{|M_H(y)|}\sum_{t\in M_H(y)}\log q_t\big(y_t\mid m_H(y),U(x)\big),
  \label{eq:rcross}
\end{equation}
and $\Lcross = \frac{1}{2n}\sum_{j=1}^{n}\big[C(s_j\mid a) + C(a\mid s_j)\big]$.
To predict a passage that it has never read, the anchor's embedding must encode what the two passages share; bottleneck auto-encoders, in contrast, reconstruct a passage from its own embedding (Section \S\ref{sec:related}).

\subsection{Within-Passage Content Supervision Signals}
Beyond its tokens, the encoder's reading of a passage, its output distribution at every masked position, describes what the passage is about. $\Lself$ distills this reading into the embedding: the encoder, which reads the light view, is the teacher, and the decoder, which reads the heavy view conditioned on the same passage's embedding, is the student: $\Lself = \tfrac12\big[D(a) + D(s_1)\big]$ with
\begin{equation}
  D(y) = \frac{1}{|M_H(y)|}\sum_{t\in M_H(y)}
  \mathrm{KL}\Big(\sg\big[p_t(\cdot\mid m_L(y))\big] \,\Big\|\, q_t\big(\cdot\mid m_H(y),U(y)\big)\Big),
  \label{eq:rself}
\end{equation}
where $p_t$ is the distribution of the encoder's MLM head and $\sg[\cdot]$ stops gradients through the targets. Every target comes from a view that saw more of $y$ than the student \citep{hinton2015distilling}. Both terms are token-weighted batch averages (Appendix~\ref{app:objective}).

An MLM loss $\Lmlm$ on the light views of $a$ and $s_1$ keeps the teacher's distributions grounded in the actual tokens, and the uniformity loss $\Lunif$ of \citet{wang2020uniformity} on the normalized anchor embeddings keeps them from collapsing. After pre-training, the projection, the decoder and the MLM head are discarded; the retriever is $f_\theta$ with mean pooling, so inference cost is unchanged. Per group, training costs five encoder passes and ten one-layer decoder passes.


\section{Experimental setup}
\label{sec:setup}


\paragraph{Pre-training corpora.} All models are pre-trained on English web documents from FineWeb \citep{penedo2024fineweb}. We take the first one million non-empty documents in the corpus's released order, tokenize them with the BERT uncased vocabulary shared by all our models, and split them into 128-token passages, keeping a trailing passage only if it has at least 64 tokens; documents with fewer than five passages cannot form a group and are discarded. From this pool we build three corpora. \textbf{100K} is a length-stratified sample of 100,000 documents, of which 40\%, 30\%, 21\% and 9\% have 5--8, 9--16, 17--32 and 33--64 passages. \textbf{1M} contains every document of the million with 5--64 passages, and \textbf{500K} is the subset of 1M that falls in a fixed random half of the million.

\paragraph{Continual pre-training from a strong retriever (main setting).} We initialize every model from E5-base-unsupervised \citep{wang2022e5}, a BERT-base encoder contrastively pre-trained on large-scale weakly supervised text pairs. E5 ships without an MLM head, so we initialize the head from BERT-base-uncased, which shares its vocabulary. Pre-training continues for   one pass over the groups of each corpus with batch size 512 (1 epoch over all the documents) and a peak learning rate of $10^{-6}$ with 5\% linear warm-up and linear decay; the encoder is frozen for the first 100 steps while the newly initialized components warm up (Table~\ref{tab:hparams}). Continual pre-training must not erase what the initial model already knows, so we apply a retention regularizer of the same form, weight and schedule to \emph{every} method, baselines included (Appendix~\ref{app:retention} lists the passages each method uses): a symmetric KL divergence between the in-batch similarity distributions of the current model and of the frozen initial model \citep{tung2019similarity,park2019relational}.
Its weight decays from 40 to 0 (Appendix~\ref{app:retention}) as $\lambda_{\text{ret}}(t) = 40\,\big(1-\sqrt{(t-1)/(T-1)}\big)$ over the $T$ training steps, which protects the initial model early and lets the pre-training objective dominate late. 

\paragraph{Pre-training from BERT.} To test whether the signal depends on a contrastive starting point, we also pre-train from BERT-base-uncased \citep{devlin2019bert}. \xPRT\ uses the same configuration as in the continual setting -- four siblings per anchor and every term of Eq.~\ref{eq:objective} -- but without the retention regularizer, since there is no pre-trained retriever to preserve. Training uses the 100K groups for six epochs (8,172 steps, batch 512) at learning rate $10^{-4}$, without a backbone freeze (Table~\ref{tab:hparams}). The groups are the same as in the continual setting and were selected with E5, so this regime tests the objective, not a selection-free pipeline. Because each run is cheap, we also use this regime for the ablations of \S\ref{sec:ablation}.

\paragraph{Baselines.} We compare with five retrieval-oriented pre-training objectives, using their official implementations adapted to our data pipeline. Four of them learn within a passage: RetroMAE \citep{xiao2022retromae}, masked auto-encoding with an enhanced one-layer decoder; DupMAE \citep{liu2023retromae2}, which adds a bag-of-words objective to RetroMAE; SimLM \citep{wang2023simlm}, replaced-language modeling through a representation bottleneck; and BoW-DPR \citep{ma2024bowdpr}, bag-of-words prediction without a decoder. Each of them trains on the anchor passage of every group -- the same passages that \xPRT\ uses as anchors, including the multiple anchors per document -- with its native masking rates and loss weights (Table~\ref{tab:baselines}). The fifth, Revela \citep{cai2026revela}, learns from other passages: a language model (Qwen2.5-0.5B with LoRA) predicts each passage while attending to the other passages of its group, weighted by the retriever's similarities. Each Revela group holds an anchor, its four siblings and eleven anchors of other documents. All baselines use the initialization, number of steps, batch size, learning-rate schedule and, in the continual setting, retention regularizer of \xPRT{}. We also report the initial model, E5, without further training.

\paragraph{Evaluation.} We evaluate the final checkpoint of every run, without fine-tuning, on the 41 tasks of MTEB(eng, v2) \citep{muennighoff2023mteb,enevoldsen2025mmteb}. The tasks span classification (8), clustering (8), pair classification (3), reranking (2), retrieval (10), semantic textual similarity (9) and summarization (1). Embeddings are mean-pooled and L2-normalized, inputs carry E5's \texttt{query:} and \texttt{passage:} prefixes, the maximum length is 512 tokens, and inference runs in bf16. We report the average main score over the 41 tasks (\emph{Avg.}) (Table~\ref{tab:retrieval-metrics}), the number of tasks on which a model is better or worse than the initial E5 (W/L), and nDCG@10, MRR@10, Recall@10 and Recall@100 averaged over the ten retrieval tasks. We select hyperparameters for the final config through validation loss that is disjoint from the evaluation benchmark which is contrastive perplexity on the MS MARCO passage-ranking dev set \citep{bajaj2016msmarco}.
    
    \begin{table}[th]
      \centering
      \caption{\textbf{Continual pre-training from E5.} Established objectives leave E5 essentially unchanged, whereas \method{} improves every task type at every corpus scale. Columns give the average main score per task type (number of tasks below; nDCG@10 for retrieval) and over all 41 tasks (Avg.); $\Delta$ is the change in Avg.\ with respect to E5, computed before rounding. Scores are $\times 100$, and the best score in each column and corpus is in bold. Win/loss counts and further retrieval metrics are in Table~\ref{tab:pertask} and Table~\ref{tab:retrieval-metrics}.}
      \label{tab:main}
      \small
      \setlength{\tabcolsep}{3.6pt}
      \begin{tabular}{llccccccccc}
        \toprule
        & & Class. & Clust. & Pair & Rerank & Retr. & STS & Summ. & Avg. & \\
        Corpus & Method & (8) & (8) & (3) & (2) & (10) & (9) & (1) & (41) & $\Delta$ \\
        \midrule
        -- & E5 (initial) & 55.72 & 46.06 & 79.19 & 44.80 & 43.26 & 63.36 & 31.46 & 53.06 & -- \\
        \midrule
        \multirow{6}{*}{100K} & RetroMAE & 55.96 & 45.70 & 79.25 & 44.70 & 42.41 & 63.69 & 32.19 & 52.92 & $-0.14$ \\
        & DupMAE & 55.95 & 45.90 & 79.24 & 44.69 & 42.44 & 63.73 & 32.13 & 52.97 & $-0.09$ \\
        & SimLM & 56.22 & 39.62 & \textbf{80.67} & 44.29 & 36.93 & 66.32 & 28.10 & 51.01 & $-2.05$ \\
        & BoW-DPR & 56.12 & 43.58 & 80.07 & 44.60 & 43.11 & 65.67 & 29.57 & 53.14 & $+0.08$ \\
        & Revela & 56.10 & 45.86 & 79.16 & \textbf{44.96} & 42.99 & 63.17 & 32.13 & 53.02 & $-0.05$ \\
        & \method{} & \textbf{56.74} & \textbf{46.75} & 79.84 & 44.83 & \textbf{45.78} & \textbf{66.85} & \textbf{32.87} & \textbf{54.86} & $\mathbf{+1.80}$ \\
        \midrule
        \multirow{6}{*}{500K} & RetroMAE & 55.99 & 45.81 & 79.31 & 44.69 & 42.38 & 63.90 & 32.05 & 52.99 & $-0.07$ \\
        & DupMAE & 56.00 & 45.93 & 79.32 & 44.66 & 42.32 & 63.90 & 32.16 & 53.00 & $-0.06$ \\
        & SimLM & \textbf{57.00} & 45.23 & \textbf{80.40} & 44.52 & 42.01 & 65.30 & 30.38 & 53.32 & $+0.26$ \\
        & BoW-DPR & 56.63 & 45.24 & 79.92 & 44.76 & 43.38 & 64.62 & 31.61 & 53.44 & $+0.38$ \\
        & Revela & 56.07 & 45.92 & 79.00 & \textbf{45.06} & 42.66 & 63.10 & 31.15 & 52.89 & $-0.17$ \\
        & \method{} & 56.67 & \textbf{47.04} & 79.91 & 44.87 & \textbf{45.67} & \textbf{67.04} & \textbf{32.93} & \textbf{54.93} & $\mathbf{+1.87}$ \\
        \midrule
        \multirow{6}{*}{1M} & RetroMAE & 56.13 & 45.88 & 79.39 & 44.66 & 41.94 & 64.07 & 32.40 & 52.97 & $-0.09$ \\
        & DupMAE & 56.11 & 45.73 & 79.39 & 44.65 & 41.99 & 64.03 & 32.41 & 52.94 & $-0.12$ \\
        & SimLM & 56.50 & 43.82 & 79.52 & 43.85 & 37.93 & 63.07 & 29.96 & 51.36 & $-1.70$ \\
        & BoW-DPR & \textbf{56.76} & 45.13 & 79.62 & 44.39 & 42.36 & 64.07 & 30.95 & 53.02 & $-0.04$ \\
        & Revela & 56.11 & 46.10 & 79.07 & \textbf{44.88} & 42.47 & 62.82 & 30.22 & 52.80 & $-0.26$ \\
        & \method{} & 56.68 & \textbf{47.15} & \textbf{79.89} & 44.87 & \textbf{45.33} & \textbf{67.13} & \textbf{33.00} & \textbf{54.89} & $\mathbf{+1.83}$ \\
        \bottomrule
      \end{tabular}
    \end{table}

\section{Results}
\label{sec:results}

We ask whether the relations between the passages of a document add to what a strong retriever and established objectives already capture (\S\ref{sec:main}), where the gains appear (\S\ref{sec:breadth}), and which ingredients produce them (\S\ref{sec:ablation} and \S\ref{sec:analysis}).

\subsection{Main results}\label{sec:main}
We conduct our experiments on E5, which was contrastively pre-trained on hundreds of millions of text pairs. We observe that well-established pre-training objectives could not improve it meaningfully (Table~\ref{tab:main}) in a continual pre-training setting. Across three corpora we observe that, the four within-passage objectives exhibit an average performance difference of about $-2.05$ to $+0.38$ points. Even the SimLM objective is unstable, losing 2.05 and 1.70 points at 100K and 1M data scales. Revela, which learns from other passages through a language model, ends 0.05 to 0.26 points below the E5 starting point. \method{}, with identical data and budget, improves E5 by 1.80, 1.87 and 1.83 points at 100K, 500K and 1M, on 34-36 of the 41 tasks (Table~\ref{tab:retrieval-metrics}) and on every task type at every scale. Furthermore, its nDCG@10 at 100K, 45.78, is 6.2\% above that of the strongest baseline, demonstrating the effectiveness of the proposed method.

One intuition that may help explain the result is that E5 as well as most of the pre-training objectives in literature lack direct supervision from the relation between different passages in a single document. E5's pairs are linked by web structure, such as a title and its body, and reconstructing a passage from its own embedding, in each of the four forms we tried, provides minimal complementary value. \method{} supervises the embedding with this intra-document relations directly. Our objective forces the anchor's embedding to single out its most related passage among passages of other documents, and learn sufficiently from the related passages' content for a one-layer decoder to predict them. Related passages alone do not suffice, as Revela shows. It trains on the same groups as \method{} -- an anchor, its four siblings and eleven anchors of other documents -- but the retriever's similarities only weight the attention of a language model that predicts each passage, so the retriever learns only indirectly, from how much each passage helps the language model. In \method{}, the relation acts on the embedding itself.

Crucially, the signal does not depend on the model used as the starting point. In a controlled ablation, we observe that pre-training BERT for six epochs on the 100K groups (Table~\ref{tab:bert}), \method{} scores 4.4 points above the strongest baseline, SimLM, with a 41\% higher retrieval nDCG@10 (29.21 against 20.68), and it is the best of the six models on 28 of the 41 tasks (Table~\ref{tab:pertask-bert}). Revela, the other objective that learns from other passages, ends 0.5 points below SimLM. The gains lie mostly in retrieval, STS, pair classification and clustering; classification and summarization are the only task types on which baselines remain ahead.

\subsection{Gains across task types and corpus scales}\label{sec:breadth}
The gains generalize across various types and corpora that \method{} is evaluated on. At 100K, gains are largest on STS (+3.5) and retrieval (+2.5 nDCG@10), which score whether two different texts mean the exact thing, and smallest on the reranking task (under 0.1), pair classification (+0.6) and clustering (+0.7). Within retrieval, the largest gains appear on tasks whose query and target are linked by an argument or a claim: Touch\'e-2020 (+6.6), ArguAna (+5.8), FEVER (+4.1), Climate-FEVER (+3.6) and HotpotQA (+3.0) (Table~\ref{tab:pertask}). The two regressions, TREC-COVID ($-1.4$) and FiQA ($-1.1$), are in specialized domains far from web text, and the two reranking tasks, which reorder short candidate lists by fine distinctions between similar questions (AskUbuntu) or by a user's clicks (MIND), are unchanged. The gain is also stable across scales: from 100K to 1M documents, with clustering and STS still rising and nDCG@10 slightly lower at 1M (45.33) than at 100K (45.78), as TREC-COVID and FiQA degrade (Table~\ref{tab:retrieval-scale}). We attribute the plateau to the continual budget, one pass at a learning rate of $10^{-6}$ under a decaying retention constraint.

\subsection{Curation of related passages is critical}\label{sec:ablation}

\begin{table}[t]
  \centering
  \caption{\textbf{Pre-training from BERT} (100K groups, six epochs, no retention regularizer). Columns as in Table~\ref{tab:main}; $\Delta$ is the change in Avg.\ relative to SimLM, the strongest baseline. \method{} and its variant without $\Rcross$ (Table~\ref{tab:ablation}) are means over three seeds. Per-task scores are in Table~\ref{tab:pertask-bert}.}
  \label{tab:bert}
  \small
  \setlength{\tabcolsep}{3.6pt}
  \begin{tabular}{lccccccccc}
    \toprule
    & Class. & Clust. & Pair & Rerank & Retr. & STS & Summ. & Avg. & \\
    Method & (8) & (8) & (3) & (2) & (10) & (9) & (1) & (41) & $\Delta$ \\
    \midrule
    RetroMAE & 51.43 & 36.96 & 72.28 & 38.60 & 19.65 & 54.37 & 33.28 & 41.96 & $-0.77$ \\
    DupMAE & 50.16 & 35.79 & 70.71 & 38.13 & 18.89 & 52.66 & \textbf{33.77} & 40.80 & $-1.93$ \\
    SimLM & 52.02 & 40.16 & 71.55 & 38.60 & 20.68 & 53.81 & 31.38 & 42.73 & -- \\
    BoW-DPR & \textbf{52.46} & 39.11 & 71.23 & 39.25 & 19.04 & 55.31 & 32.29 & 42.57 & $-0.16$ \\
    Revela & 51.25 & 40.28 & 69.88 & 39.43 & 17.45 & 56.14 & 31.35 & 42.24 & $-0.49$ \\
    \midrule
    \method{} & 51.33 & \textbf{44.29} & \textbf{76.20} & \textbf{40.77} & \textbf{29.21} & \textbf{59.83} & 26.23 & \textbf{47.12} & $\mathbf{+4.39}$ \\
    \quad without $\Rcross$ (\S\ref{sec:ablation}) & 50.96 & 44.17 & 75.85 & 40.74 & 28.46 & 59.27 & 25.25 & 46.67 & $+3.94$ \\
    \bottomrule
  \end{tabular}
\end{table}

\begin{table}[t]
\centering
\caption{\textbf{Ablations when pre-training from BERT} (100K groups, six epochs; $\times 100$): averages over 41 tasks and 10 retrieval tasks.}
\label{tab:ablation}
\small
\setlength{\tabcolsep}{3.1pt}
\begin{tabular}{llcrrr}
\toprule
& & \multicolumn{2}{c}{MTEB (41)} & \multicolumn{2}{c}{Retrieval (10)} \\
\cmidrule(lr){3-4}\cmidrule(lr){5-6}
& Variant & Avg. & $\Delta$ & nDCG@10 & $\Delta$ \\
\midrule
& \method{} (Eq.~\ref{eq:objective}, $\lambda_c=0.1$) & \textbf{47.12} & & 29.21 & \\
& without $\Rcross$ & 46.67 & $-0.45$ & 28.46 & $-0.75$ \\
\cmidrule(lr){2-6}
\multirow{4}{*}{\shortstack[l]{$\Lxp$}}
& random passage of the same document & 42.22 & $-4.90$ & 19.87 & $-9.34$ \\
& top sibling of another document & 40.80 & $-6.32$ & 18.03 & $-11.18$ \\
\cmidrule(lr){2-6}
& SimLM (strongest baseline) & 42.73 & $-4.39$ & 20.68 & $-8.53$ \\
\midrule
\cmidrule(lr){2-6}
\multirow{4}{*}{\emph{objective}}
    & without $\Lxp$ (prediction only) & 43.72 & $-3.4$ & 19.47 & $-9.74$ \\
& without $\Rself$ & 44.40 & $-2.72$ & 24.78 & $-4.4$ \\
& without $\Lmlm$ & 43.19 & $-3.94$ & 22.93 & $-6.29$ \\
& without $\Lunif$ & 46.73 & $-0.39$ & 28.45 & $-0.76$ \\
\midrule
\multirow{3}{*}{\shortstack[l]{\emph{no cross-}\\\emph{passage}\\\emph{signal}}}
& self-prediction, all five passages & 42.62 & $-4.50$ & 19.20 & $-10.01$ \\
& self-prediction, $a$ and $s_1$, $2\times$ steps & 42.75 & $-4.37$ & 18.80 & $-10.41$ \\
& RetroMAE, all five passages, $5\times$ steps & 41.73 & $-5.39$ & 19.21 & $-10.00$\\
\midrule
& \emph{siblings} ranked by E5 cosine only (no TF-IDF) & 44.82 & $-2.30$ & 25.88 & $-3.33$ \\
&  \emph{siblings} ranked by TF-IDF only (no E5 cosine) & 47.10 & -0.02& \textbf{29.89} & +0.67\\
\bottomrule
\end{tabular}
\end{table}

Table~\ref{tab:ablation} ablates \method{} in the BERT regime, isolating the contributions of each term in the objective and of sibling selection by removing ingredients one at a time. The decisive ingredient is the cross-passage signal from curated related passages. Removing cross-passage prediction alone lowers the average by 0.45 points and nDCG@10 by 0.75, on every task type (Table~\ref{tab:bert}). Removing the cross-passage signal altogether -- dropping $\Lcross$ and pairing each anchor with a random passage of its document instead of its selected sibling -- costs 4.9 points on average and 9.3 of nDCG@10, 32\% of the retrieval score, and falls below SimLM; a sibling from another document is worse still. More text cannot substitute for this signal: three controls that drop both cross-passage terms and give self-prediction or RetroMAE as much text as \method{} or more, with up to five times the training steps, still plateau at 18.8--19.2 nDCG@10.

Random pairs fail because web pages mix genuine content with widgets, link lists, and contact details (Figure~\ref{fig:motivation}; Table~\ref{tab:anchor-examples}) -- noise inherited by random same-document pairing, as used in contrastive pre-training \citep{izacard2022contriever,gao2022cocondenser}. Just as filtering noisy pairs benefits weakly supervised pre-training \citep{wang2022e5}, curating them is essential in the unsupervised setting.
Once the pairs are curated, the remaining design choices matter far less. Sibling ranking method changes the average by at most 2.3 points (TF-IDF alone even gives a slightly higher nDCG@10 than the fused ranking, $+0.67$), whereas a random passage costs 4.9. Among the loss terms, $\Lxp$ carries most of the gain ($-9.7$ nDCG@10 when removed), $\Lmlm$ is also necessary ($-3.9$ points), since it grounds the prediction targets in the actual tokens, and likewise $\Lself$ ($-2.7$ points) matters, removing $\Lunif$, by contrast, changes the average by only 0.4.

\subsection{What the cross-passage signal changes}\label{sec:analysis}
The cross-passage signal teaches the embedding which passages address one another, beyond which ones share words. ArguAna shows this most clearly: its target is the counter-argument of the query, it has one of the largest gains (+5.8 nDCG@10), and its debate pages, which collect points and counterpoints on one motion, resemble \method{}'s groups. Re-encoding its 1,406 queries in full precision (Appendix~\ref{app:analysis}), \method{} ranks a passage from the query's own debate first for 79.2\% of the queries, against 73.9\% for E5. The gain does not come from lexical matching: after splitting the queries into thirds by the IDF-weighted overlap between query and counter-argument, \method{} improves over E5 in every third (+3.7 to +8.8 nDCG@10), whereas BoW-DPR, whose objective is lexical, gains only in the highest-overlap third and loses 10.3 points in the lowest; CQADupstack-Gaming shows the same pattern. For the argument that legalizing drugs would reduce crime, E5 ranks first another argument of the same debate, which does not address crime, whereas \method{} ranks first the counter-argument that most drug crimes are committed by people on drugs (Table~\ref{tab:examples} gives six such queries).

\section{Related work}
\label{sec:related}

\textbf{Unsupervised pair construction:}
Lacking labelled pairs \citep{bajaj2016msmarco}, one family builds pseudo-relevant pairs from an unlabelled corpus and trains the pooled vector contrastively \citep{oord2018cpc}: inverse cloze pairs a sentence with its context \citep{lee2019ict,chang2020pretraining}, Contriever and SPIDER pair two crops of one document \citep{izacard2022contriever,ram2022spider}, coCondenser adds a corpus-aware term over a bottleneck \citep{gao2022cocondenser}, and weakly supervised mining filters noisy pairs at scale \citep{wang2022e5}; supervised pipelines then add hard negatives and cross-encoder filtering \citep{xiong2021ance,qu2021rocketqa}. In all of them the positive is fixed by position or by chance, so a pair counts as equally relevant whether its two passages share a subject or only a URL.

\textbf{Bottleneck auto-encoding:}
A second family reconstructs a passage through the pooled vector, weakening the decoder so that the vector is the only route to the target \citep{lu2021seed,gao2021condenser,xiao2022retromae,li2023challenging}. Later variants change what is reconstructed rather than where it comes from: a bag-of-words term \citep{liu2023retromae2}, replaced-language modeling \citep{wang2023simlm}, a lexicon bottleneck \citep{shen2023lexmae}, or the passage's own words with no decoder at all \citep{ma2024bowdpr}. Every target still comes from the passage just read.

\textbf{Supervision that leaves the passage:}
CoT-MAE reconstructs one neighbouring span alongside the input \citep{wu2023cotmae}, and Revela reaches the retriever through a language model that predicts each passage while attending to the others in its batch \citep{cai2026revela}. Our machinery is standard: soft targets from a better-informed view \citep{hinton2015distilling}, distillation that transfers geometry rather than logits \citep{park2019relational,tung2019similarity}, and InfoNCE read through alignment and uniformity \citep{wang2020uniformity}. What \xPRT\ adds is which passages to supervise (\S\ref{sec:selection}), which separates it from contrastive objectives whose positive is a random crop, an adjacent span or a structural counterpart, from auto-encoders whose targets never leave the encoded passage, and from Revela, where the relation between passages reaches the retriever through a language model's attention rather than through the embedding.

\section{Conclusion}
\label{sec:conclusion}
Established retrieval pre-training objectives leave unused how the passages of a document relate to one another. \method{} selects representative anchors and their most related passages, aligns their embeddings and predicts the related passages through the anchor's embedding. Under a controlled protocol, it is the only objective that improves E5 substantially, by 1.8--1.9 points on MTEB(eng, v2) and on every task type at every corpus scale, and from BERT it outperforms every baseline by 4.4 points or more; ablations trace the gain to the cross-passage signal from curated related passages.

\subsection*{AI use statement}
In this work, we used generative AI tools to propose and refine hypotheses, to give feedback on our experimental design, to help implement parts of the method and the evaluation harness, and to help interpret results. Additionally, we used generative AI tools to search for and summarize related work, to identify relevant literature and format references, to draft and edit parts of the manuscript for readability, and to suggest its structure.

We have reviewed all AI-assisted work. Every reference was checked against primary metadata from arXiv or the publisher rather than accepted from model output, because language models are known to fabricate citations, and references we could not verify were removed. Every number reported in this paper comes from a training or evaluation run that we executed and logged; no result was produced or estimated by a model. Where a model proposed an experiment or an explanation for a result, we ran the experiment and confirmed the outcome before any claim resting on it entered the paper, and AI-assisted code was read and tested by the authors. We take responsibility for the final content of this work, including text, claims or artifacts produced with the aid of generative AI.

\subsection*{Reproducibility statement}
Section~\ref{sec:method} states the objective in full, including every loss term and the procedure that selects anchors and their siblings, and Section~\ref{sec:setup} states how the three corpora are derived from FineWeb, which public checkpoints initialize each run, the optimization schedule and batch size, the retention regularizer applied identically to every method (Appendix~\ref{app:retention}), the configuration of each baseline (Appendix~\ref{app:baselines}), and the evaluation protocol. Table~\ref{tab:hparams} gives complete hyperparameter settings for the continual and the BERT regime, and Appendix~\ref{app:results} gives per-task scores for every run from which the averages in the main text are computed. Group selection is deterministic given a corpus and the frozen scorer (Appendix~\ref{app:selection}), so the training examples underlying every run can be regenerated exactly. Table~\ref{tab:bert} and the first rows of Table~\ref{tab:ablation} report means over three seeds. We provide the complete code base, covering group selection, all pre-training objectives and the evaluation harness, with README files for reproducing each table, as anonymous supplementary material.

\bibliography{iclr2027_conference}
\bibliographystyle{iclr2027_conference}

\appendix
%
%
%
%

\providecommand{\E}{\mathbb{E}}
\providecommand{\method}{xPRT}
\providecommand{\Lnce}{\mathcal{L}_{\mathrm{NCE}}}
\providecommand{\by}[2]{\overset{\mathrm{(#1)}}{#2}}
\providecommand{\proofstep}[1]{\par\smallskip\noindent\textit{#1}\enspace}
\makeatletter
\@ifundefined{reasons}{%
  \newenvironment{reasons}%
    {\par\smallskip\noindent\small
     \begin{tabular}{@{}l@{\hspace{0.6em}}p{\dimexpr\linewidth-3em\relax}@{}}}%
    {\end{tabular}\par\smallskip}}{}
\makeatother

\section{Theoretical analysis}
\label{app:theory}

Our experiments establish two facts. Cross-passage prediction outperforms
within-passage reconstruction (Table~\ref{tab:ablation}), and reconstruction-based
pretrainers such as RetroMAE and DupMAE end \emph{below} the retriever they are
initialized from (Table~\ref{tab:main}). The first fact is expected. The second is not:
a weaker training signal should yield a smaller gain, not a loss. The analysis explains
both through a single mechanism.
\begin{enumerate}
\item \textbf{The text pairs dictate what any of the objectives learns.} Under squared error or KL, the optimal predictor is the average
  target given the input. When the target is another passage of the same document, this
  average removes everything except what the two passages share: their subject, which is
  exactly what relevance depends on (Theorem~\ref{app:thm:cross}).
\item \textbf{Reconstruction makes the subject compete with passage-specific content.}
  When the decoder also reads a masked copy of the target and the conditioning vector has
  limited capacity, reconstruction pays for each kind of information in proportion to how
  often the copy hides it. Topical words recur within a passage, so the copy reveals the
  subject more often than passage-specific content, and the subject loses the competition
  for capacity. In the limit of an unrestricted encoder, the optimum carries no
  information about the subject at all (Theorem~\ref{app:thm:separation}). This is
  consistent with the degradation.
\item \textbf{The conditions are made explicit.} The results assume that sibling passages
  share nothing beyond their subject and that queries are encoded like passages. They
  therefore also indicate where cross-passage training should help less: siblings that
  share non-topical content, and queries unlike passages. This result highlights promising directions for future work by improving how cross-passage training pairs are constructed for self-supervised retrieval model training.
\end{enumerate}
The analysis does not imply that within-passage signals are useless: removing
$\mathcal{L}_{\mathrm{self}}$ from the full objective lowers performance
(Table~\ref{tab:ablation}). Cross-passage prediction supplies the subject; within-passage
prediction can complement it, but on its own it puts the subject at a disadvantage.
The results concern the optima of the objectives, not training dynamics, and
Theorems~\ref{app:thm:within} and~\ref{app:thm:separation} hold in a model of passages
rather than on web data. Section~\ref{app:results} states the setting and the results.
The proofs in Section~\ref{app:proofs} can be skipped without losing the argument.

\subsection{Setting and results}
\label{app:results}

\paragraph{Setting.}
Let $Y\in\{1,\dots,k\}$ be the \emph{subject} of a document, that is, what it is about,
with prior $\pi_y=P(Y=y)>0$. The \emph{subject posterior} of a text $x$ is the vector
$p(x)\in\mathbb{R}^k$ with entries $p_y(x)=P(Y=y\mid x)$. It records everything $x$
reveals about the subject, and nothing else.

Both objectives train a predictor $\psi$ that maps an input $a$ to features
$\varphi(t)\in\mathbb{R}^d$ of a target $t$. In \method{}, $\varphi(t)$ is the teacher's
token distribution at a masked position of $t$, and $\psi(a)$ is the decoder's output
given the conditioning vector $U(a)$. The two objectives differ only in the pair $(a,t)$:
\begin{itemize}
\item \emph{cross-passage:} $a$ is an anchor and $t$ is a sibling $s$ from the same
  document;
\item \emph{within-passage:} $a$ is the visible part of a passage and $t$ is its masked
  part.
\end{itemize}
For squared error, and for the KL divergence from teacher to student used in $D$, the
optimal predictor is the conditional mean (Lemma~\ref{app:lem:mean}):
\begin{equation*}
  \psi^*(a)\;=\;\E\big[\varphi(t)\mid a\big].
\end{equation*}
$\psi^*$ is what an objective asks for, independently of architecture and optimization.
Because one input can be followed by many targets, $\psi^*$ averages over them. What
survives this averaging depends on the target, and this is the mechanism behind every
result below.

The \emph{profile} of subject $y$ is its typical target, $\mu_y=\E[\varphi(t)\mid Y=y]$.
Collecting the profiles as the columns of $M=[\mu_1\,\cdots\,\mu_k]\in\mathbb{R}^{d\times k}$,
the vector $M\,p(a)=\sum_y p_y(a)\,\mu_y$ mixes the profiles in proportion to the
posterior of $a$.

\paragraph{Premises.}
\begin{description}
\item[(P1) Relevance is a shared subject.] A query and its relevant passage are
independent given their common subject, and an irrelevant passage is an independent
draw from the corpus. Every query and passage is a possible anchor, and the subject
posterior of a text does not depend on whether it is read as a query, a passage or an anchor.
\item[(P2) Clean pairs.] Selected pairs are independent given their subject:
  $a\perp s\mid Y$.
\end{description}
Under P1, the score
\begin{equation*}
  S^*(q,x)\;=\;\sum_{y}\frac{p_y(q)\,p_y(x)}{\pi_y}
\end{equation*}
orders candidates by their posterior probability of being relevant
(Lemma~\ref{app:lem:rank}), so no retriever can rank better.

P1 defines the retrieval task, as any analysis of retrieval must; its second sentence is
the transfer assumption that lets a pretraining optimum be applied to queries. P2 is the
property that anchor filtering and sibling ranking are responsible for. It is the
conditional-independence condition used to analyze self-supervised learning
\citep{tosh2021contrastive,lee2021predicting}. It does not require a passage to have a
single topic, since $Y$ may describe a mixture of topics, nor does it require a passage
to reveal its subject with certainty. It excludes sharing that is not about the subject,
such as a sentence split across adjacent passages or common boilerplate. Within-passage
pairs violate P2 by construction, and the contrast between the two objectives rests on
this.

\paragraph{Cross-passage prediction.}
The siblings that can follow an anchor differ in everything except their subject, so
averaging over them removes everything else.

\begin{theorem}[Cross-passage prediction retains exactly the subject]
\label{app:thm:cross}
Under P2, for every anchor $a$,
\begin{equation*}
  \psi^*(a)\;=\;M\,p(a).
\end{equation*}
Consequently:
\begin{enumerate}
\item[\textup{(i)}] \emph{Nothing else is retained:} $\psi^*(a)$ depends on $a$ only
  through $p(a)$.
\item[\textup{(ii)}] \emph{The subject is retained up to what targets distinguish:}
  for every query $q$ and passage $x$ (under P1),
  \[ \psi^*(q)^\top\psi^*(x)\;=\;\sum_{y,y'}p_y(q)\,p_{y'}(x)\,\langle\mu_y,\mu_{y'}\rangle, \]
  so the dot product scores subject overlap, weighted by the similarity of the subjects'
  profiles. If $\mu_1,\dots,\mu_k$ are linearly independent, a fixed reweighting $G$ gives
  $\psi^*(q)^\top G\,\psi^*(x)=S^*(q,x)$, the best possible ranking.
\end{enumerate}
\end{theorem}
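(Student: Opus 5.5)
The plan starts from the characterization of the optimum in Lemma~\ref{app:lem:mean}: the optimal predictor is $\psi^*(a)=\E[\varphi(s)\mid a]$, where $s$ is a sibling drawn jointly with the anchor $a$.

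\textbf{Main identity.} I would condition on the subject via the tower property:
\[
\E[\varphi(s)\mid a]=\sum_{y}P(Y=y\mid a)\,\E[\varphi(s)\mid a,Y=y].
\]
Premise P2 ($a\perp s\mid Y$) lets us drop $a$ from the inner conditioning, so the inner term equals $\E[\varphi(s)\mid Y=y]=\mu_y$. Substituting gives $\psi^*(a)=\sum_y p_y(a)\mu_y=M\,p(a)$.

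One point needs care. The profile $\mu_y$ is defined through the target distribution, and the target here is a sibling. I will therefore read $\mu_y$ as the expected sibling feature given subject $y$, which is the natural interpretation in the cross-passage objective. If $\varphi$ is the teacher's distribution at a random masked position, I would also average over that position, which just folds into $\E[\cdot\mid Y=y]$. A further measure-theoretic caveat is that conditioning on $a$ should be on events of positive probability or via regular conditional distributions. With $Y$ finite and $\pi_y>0$, this is routine.

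\textbf{Claim (i).} This is immediate from the identity: $M$ is a fixed matrix, so $\psi^*(a)$ is a function of $p(a)$ alone.

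\textbf{Dot product in (ii).} I expand bilinearly:
\[
\psi^*(q)^\top\psi^*(x)=\Big(\sum_y p_y(q)\mu_y\Big)^{\!\top}\Big(\sum_{y'}p_{y'}(x)\mu_{y'}\Big).
\]
Applying the identity to $q$ and $x$ requires the second sentence of P1: queries and passages are possible anchors, and their subject posteriors do not depend on the role they play, so the same $\psi^*$ and $p(\cdot)$ apply.

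\textbf{Reweighting $G$.} This is the only step that needs a construction. If the $\mu_y$ are linearly independent, $M$ has full column rank $k$, so $M^\top M$ is invertible and the Moore--Penrose pseudo-inverse $M^+=(M^\top M)^{-1}M^\top$ satisfies $M^+M=I_k$. I set $D=\mathrm{diag}(1/\pi_1,\dots,1/\pi_k)$ and $G=(M^+)^\top D\,M^+$. Then
\[
\psi^*(q)^\top G\,\psi^*(x)=p(q)^\top M^\top(M^+)^\top D\,M^+M\,p(x)=p(q)^\top D\,p(x)=S^*(q,x).
\]
Lemma~\ref{app:lem:rank} then says $S^*$ induces the optimal ranking, which finishes the proof. $G$ is fixed because it depends only on the profiles and the prior, not on $q$ or $x$.

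I expect the main obstacle to be bookkeeping rather than depth. The step that could fail is using P2 to replace $\E[\varphi(s)\mid a,Y]$ by $\mu_y$, which requires the profiles to be defined with respect to the sibling's marginal given $Y$. I would state this convention explicitly at the start of the proof.
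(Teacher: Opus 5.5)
Your proposal is correct and matches the paper's proof step for step. Both use total expectation over $Y$, then P2 to replace $\E[\varphi(s)\mid a,Y=y]$ by $\mu_y$, then the bilinear expansion justified by the transfer clause of P1, and finally the reweighting $G=(M^+)^\top\Pi^{-1}M^+$ with $M^+=(M^\top M)^{-1}M^\top$, closed by Lemma~\ref{app:lem:rank}; your reading of $\mu_y$ as the sibling profile is exactly the paper's definition, since in the cross-passage objective the target $t$ is the sibling $s$.
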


Linear independence means that different subjects produce different targets on average. Theorem~\ref{app:thm:cross} is the cross-passage instance of the
observation of \citet{lee2021predicting} that predicting one view from another recovers
the latent the views share.

\begin{remark}[The decoder's view of the target]
\label{app:rem:copy}
In \method{} the decoder also reads a heavily masked view $\hat s$ of the sibling.
Part~(i) of Theorem~\ref{app:thm:cross} persists: under P2, the optimal predictor
$\E[\varphi(s)\mid a,\hat s]$ depends on $a$ only through $p(a)$.
\end{remark}

\paragraph{A model of passages.}
A comparison with reconstruction requires a passage to contain something besides its
subject. In the \emph{component model}, a passage is $x=(Y,\nu_x)$, where the
\emph{specific content} $\nu_x$ is independent of $Y$ and of the content of every other
passage. Each text (an anchor, a sibling, or the visible or masked part of a passage) is
generated from the content of its passage, with randomness independent of other
passages. Target features are additive in the two components:
\begin{equation}
  \E\big[\varphi(t)\mid a,\,Y,\,\nu_t\big]\;=\;c(Y)+d(\nu_t),
  \tag{CM}\label{app:eq:cm}
\end{equation}
where $\nu_t$ is the specific content of the passage that $t$ comes from. For cross
pairs, \eqref{app:eq:cm} only asks the conditional mean of $\varphi$ to be additive.
For within pairs it is an idealization: the features of the masked part depend on the
content of the passage, not on which tokens were masked. Cross pairs satisfy P2 in this
model.

\paragraph{Within-passage prediction.}
When the target is the masked part of the same passage, its specific content does not
average out, because the visible part predicts it.

\begin{theorem}[Within-passage prediction also retains passage-specific content]
\label{app:thm:within}
In the component model, for every pair $(a,t)$,
\begin{equation*}
  \psi^*(a)\;=\;M\,p(a)+\Delta(a),
  \qquad
  \Delta(a)\;=\;\E\big[d(\nu_t)\mid a\big]-\E\big[d(\nu_t)\big].
\end{equation*}
\begin{enumerate}
\item[\textup{(i)}] \emph{Cross-passage:} $\Delta\equiv0$, which recovers
  Theorem~\ref{app:thm:cross}.
\item[\textup{(ii)}] \emph{Within-passage:} for visible parts $v,v'$ with $p(v)=p(v')$,
  \[ \psi^*(v)-\psi^*(v')\;=\;\E\big[d(\nu)\mid v\big]-\E\big[d(\nu)\mid v'\big], \]
  where $\nu$ is the specific content of the passage each part comes from. Passages
  about the same subject receive different predictions whenever their visible parts
  indicate different specific content.
\end{enumerate}
\end{theorem}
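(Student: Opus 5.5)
The plan is to start from Lemma~\ref{app:lem:mean}, which gives $\psi^*(a)=\E[\varphi(t)\mid a]$, and compute this conditional mean with the tower property. We first condition additionally on the pair $(Y,\nu_t)$ and then average back:
\begin{equation*}
  \psi^*(a)\;=\;\E\big[\E[\varphi(t)\mid a,Y,\nu_t]\,\big|\,a\big]
  \;=\;\E\big[c(Y)\mid a\big]+\E\big[d(\nu_t)\mid a\big],
\end{equation*}
where the second equality is exactly \eqref{app:eq:cm}. The first term is $\sum_y p_y(a)\,c(y)$ by the definition of the subject posterior. To identify it with $M\,p(a)$, we compute each profile. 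Taking the mean of \eqref{app:eq:cm} given only $Y=y$, and using that $\nu_t$ is independent of $Y$, gives $\mu_y=c(y)+\E[d(\nu_t)]$. Hence $\sum_y p_y(a)c(y)=M\,p(a)-\E[d(\nu_t)]$, since $\sum_y p_y(a)=1$. Adding the second term gives $\psi^*(a)=M\,p(a)+\Delta(a)$ with $\Delta$ as stated.

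One point needs care here. The independence of $\nu_t$ from $Y$ is used unconditionally, not given $a$, so the step is valid. Also, $\E[d(\nu_t)]$ must be the same for whichever passage $t$ comes from, so it should be argued that the law of $\nu$ is common across passages. If it is not, the constant is absorbed into the definitions of $\mu_y$ and $\Delta$ consistently, since both use the marginal of $\nu_t$ in the training distribution.

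For (i), in a cross pair $a$ is generated from $(Y,\nu_a)$ with its own independent randomness. The target's content $\nu_t=\nu_s$ is independent of $Y$, of $\nu_a$, and of that randomness. So $\nu_t$ is independent of $a$, which gives $\E[d(\nu_t)\mid a]=\E[d(\nu_t)]$ and $\Delta\equiv0$. This recovers Theorem~\ref{app:thm:cross}, whose profiles agree with the ones computed above.

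For (ii), in a within pair the target comes from the same passage as $v$, so $\nu_t=\nu$. Subtracting the main formula at $v$ and $v'$, the terms $M\,p(v)$ and $M\,p(v')$ cancel because $p(v)=p(v')$, and the constant $\E[d(\nu)]$ cancels too. What remains is $\E[d(\nu)\mid v]-\E[d(\nu)\mid v']$. The closing sentence of (ii) is just an interpretation: this difference is nonzero whenever the two visible parts induce different conditional means of $d(\nu)$.

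The main obstacle is to justify rigorously that $\nu_t\perp a$ in the cross case, and more generally to identify the profile constant. Both need a careful reading of the component model's generative independence: distinct passages have independent contents, and text generation uses fresh randomness. I would state that factorization explicitly, writing the joint law as $P(Y)\,P(\nu_a)\,P(\nu_s)\,P(a\mid Y,\nu_a)\,P(s\mid Y,\nu_s)$, and then read off both independences directly.
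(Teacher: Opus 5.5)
Your proposal is correct and follows the paper's proof essentially step for step. Both arguments use the tower property over $(Y,\nu_t)$ together with (CM), the profile identity $\mu_y=c(y)+\E[d(\nu_t)]$ from $\nu_t\perp Y$ and $\sum_y p_y(a)=1$, then $\nu_s\perp a$ for (i) and the cancellation $M\,p(v)=M\,p(v')$ for (ii); the paper also states explicitly that the same independence gives $a\perp s\mid Y$, so P2 holds in the model and Theorem~\ref{app:thm:cross} applies, which your factorization $P(Y)\,P(\nu_a)\,P(\nu_s)\,P(a\mid Y,\nu_a)\,P(s\mid Y,\nu_s)$ yields directly.
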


\paragraph{Adding the bottleneck and the decoder's copy.}
Theorem~\ref{app:thm:within} omits two features of \method{}: the decoder also reads a
heavily masked copy of the target, and the conditioning vector has limited capacity.
With both, the vector is rewarded only for what the copy leaves out.

In the component model, let $Y$ and $\nu$ be uniform on $\mathbb{Z}_m$ with $m\ge2$, and
let a passage be its content, $x=(Y,\nu_x)$. The encoder maps its input to a random
vector $U$ with capacity $I(U;\text{input})\le B$, where $0<B\le\log m$. The decoder
reads $U$ and a masked copy $\hat t$ of the target $t$. The copy reveals $Y$ with
probability $\beta_Y$ and $\nu_t$ with probability $\beta_\nu$, independently of each
other and of everything else, with $0<\beta_Y,\beta_\nu<1$. The decoder predicts $t$
under the log loss. By Lemma~\ref{app:lem:reward}, the loss of the best decoder is
$H(t\mid\hat t)-R$, where the \emph{reward}
\begin{equation*}
  R\;=\;I\big(U;\,t\mid\hat t\big)
\end{equation*}
is the information the vector adds about the target beyond the copy. In the
cross-passage objective the input is an anchor $a$ and the target is a sibling $s$. In
the within-passage objective the input and the target are both the passage $x$.

\begin{theorem}[Under a bottleneck, reconstruction disfavors the subject]
\label{app:thm:separation}
In this model:
\begin{enumerate}
\item[\textup{(i)}] \emph{Cross-passage.} For every encoder,
  $R_{\mathrm{cross}}=(1-\beta_Y)\,I(U;Y)$. Every maximizer has $I(U;Y)=B$: it spends its
  whole capacity on the subject.
\item[\textup{(ii)}] \emph{Within-passage: competition.} Suppose $U$ describes the subject
  and the specific content separately: $U=(U_Y,U_\nu)$, where $U_Y$ is a randomized
  function of $Y$ and $U_\nu$ of $\nu_x$, each with its own randomness. Let
  $b_Y=I(U;Y)$ and $b_\nu=I(U;\nu_x)$, with $b_Y+b_\nu\le B$. Then
  \[ R_{\mathrm{within}}\;=\;(1-\beta_Y)\,b_Y+(1-\beta_\nu)\,b_\nu, \]
  so each component is paid in proportion to how often the copy hides it. If
  $\beta_Y>\beta_\nu$, every maximizer has $b_Y=0$ and $b_\nu=B$.
\item[\textup{(iii)}] \emph{Within-passage: unrestricted encoders.} Every maximizer of
  $R_{\mathrm{within}}$ has $I(U;x)=B$ and $I(U;Y)=I(U;\nu_x)=0$, for every
  $\beta_Y,\beta_\nu\in(0,1)$.
\item[\textup{(iv)}] \emph{Retrieval.} Let a query be a passage $q=(Y,\nu_q)$ with fresh
  specific content, ranked against $N$ candidates: one relevant passage with the same
  subject and $N-1$ independent draws from the corpus. Let the retrieval embedding be any
  function of $U$, possibly randomized, and let the score be arbitrary, with ties broken
  at random. For every maximizer in (iii), the rank of the relevant passage is uniform on
  $\{1,\dots,N\}$. For every cross-passage maximizer with $B=\log m$, the subject is a
  function of $U$, and ranking by agreement of subjects is the best possible ranking.
\end{enumerate}
All maxima are attained.
\end{theorem}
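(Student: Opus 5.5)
The plan is to compute the reward $R=I(U;t\mid\hat t)$ exactly in every case by conditioning on the \emph{mask pattern}. The copy $\hat t$ reveals $Y$ and the specific content $\nu_t$ on two independent events, of probabilities $\beta_Y$ and $\beta_\nu$, and these events are independent of everything else. I will assume the pattern is readable from $\hat t$; the proof must state this, and if it fails the four-case split below needs an extra step. Under that assumption,
\begin{equation*}
R=(1-\beta_Y)(1-\beta_\nu)\,I(U;Y,\nu_t)+\beta_Y(1-\beta_\nu)\,I(U;\nu_t\mid Y)+(1-\beta_Y)\beta_\nu\,I(U;Y\mid\nu_t),
\end{equation*}
and the event where both components are revealed contributes $0$. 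Every part of the theorem then reduces to evaluating these three mutual informations with the chain rule and the independence structure of the component model. The optimization steps become linear programs in a few mutual-information coordinates.

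For (i), $U$ is a randomized function of the anchor $a=(Y,\nu_a)$, and $\nu_s$ is independent of $(Y,\nu_a,U)$. This gives $I(U;\nu_s\mid Y)=0$, $I(U;Y\mid\nu_s)=I(U;Y)$ and $I(U;Y,\nu_s)=I(U;Y)$. Substituting gives $R_{\mathrm{cross}}=(1-\beta_Y)I(U;Y)$. The constraint $I(U;Y)\le I(U;a)\le B$ follows from the data-processing inequality. For attainment, I take $U$ to be the output of an $m$-ary symmetric channel applied to $Y$, with crossover chosen by continuity so that $I(U;Y)=B$; this is possible since $B\le\log m$. Every maximizer therefore has $I(U;Y)=B$.

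For (ii), $U_Y$ and $U_\nu$ are independent randomized functions of the independent variables $Y$ and $\nu_x$. This gives $I(U;Y,\nu_x)=b_Y+b_\nu$, $I(U;\nu_x\mid Y)=b_\nu$ and $I(U;Y\mid\nu_x)=b_Y$. Summing the three terms yields $R_{\mathrm{within}}=(1-\beta_Y)b_Y+(1-\beta_\nu)b_\nu$. If $\beta_Y>\beta_\nu$, the coefficient of $b_\nu$ is strictly larger, so this linear function on $\{b_Y+b_\nu\le B\}$ is maximized only at $b_Y=0$, $b_\nu=B$. That point is attained with $U_Y$ constant and $U_\nu$ a symmetric channel of $\nu_x$.

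For (iii), I rewrite the same decomposition with the chain rule, $I(U;\nu\mid Y)=I(U;x)-I(U;Y)$ and $I(U;Y\mid\nu)=I(U;x)-I(U;\nu)$. This gives
\begin{equation*}
R_{\mathrm{within}}=(1-\beta_Y\beta_\nu)\,I(U;x)-\beta_Y(1-\beta_\nu)\,I(U;Y)-(1-\beta_Y)\beta_\nu\,I(U;\nu_x).
\end{equation*}
All three coefficients are strictly positive and $I(U;x)\le B$. The value $(1-\beta_Y\beta_\nu)B$ is therefore an upper bound, reached only when $I(U;x)=B$ and $I(U;Y)=I(U;\nu_x)=0$. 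For attainment, I take $U$ to be a symmetric channel of $Y+\nu_x \bmod m$. Since $Y$ and $\nu_x$ are uniform and independent, $Y+\nu_x$ is independent of $Y$ and of $\nu_x$ separately, and the crossover can be tuned to give $I(U;x)=B$.

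For (iv), the key point is that $I(U;Y)=0$ means the law of $U$ given $Y=y$ does not depend on $y$. Given $Y$, the query's $U_q$ and the relevant passage's $U_{\mathrm{rel}}$ are conditionally independent, because their specific contents are fresh. So $(U_q,U_{\mathrm{rel}})$ are unconditionally independent with the common marginal law. The distractors' $U$'s are i.i.d.\ with that same law and independent of $U_q$. The candidate embeddings are thus exchangeable given the query embedding, and random tie-breaking makes the rank of the relevant passage uniform on $\{1,\dots,N\}$. In the cross case with $B=\log m=H(Y)$, we get $H(Y\mid U)=0$, so $Y$ is a function of $U$. Here $p(x)$ is a point mass, so $S^*$ reduces to a positive multiple of $\mathbf 1[Y_q=Y_x]$, and Lemma~\ref{app:lem:rank} yields optimality.

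The main obstacle is the first step: justifying the four-case decomposition rigorously. This requires that the mask pattern be a function of $\hat t$ and independent of $(U,t)$, so that conditioning on $\hat t$ splits into conditioning on the pattern plus on the revealed components. I would try it first by writing $\hat t=(E_Y,E_\nu,E_Y Y,E_\nu\nu_t)$ with Bernoulli indicators and applying the chain rule over the indicators.
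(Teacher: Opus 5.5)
Your proposal is correct and takes essentially the paper's route. You condition on the mask pattern, which is exactly the assumption you flag: the paper reads the pattern off $\hat w$ because the mask symbol differs from every value, and the model supplies the independence of the indicators. You then reduce each part to a linear bound in a few mutual-information coordinates, attain that bound with explicit channels, and use exchangeability of the candidates for (iv). The differences are minor: you expand $I(U;t\mid\hat t)$ directly into conditional mutual informations where the paper computes $I(U;\hat w)$ and uses $I(U;w\mid\hat w)=I(U;w)-I(U;\hat w)$, and you reach capacity $B$ with an $m$-ary symmetric channel tuned by continuity where the paper uses an erasure channel with erasure probability $1-B/\log m$, which gives $I=B$ exactly.
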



\paragraph{Reading Theorem~\ref{app:thm:separation}.}
Part~(ii) is the general picture. Reconstruction pays for each kind of information in
proportion to how often the decoder's copy hides it. Topical words recur within a
passage, so the copy reveals the subject more often than a specific fact
($\beta_Y>\beta_\nu$), and the subject loses the competition for capacity. Part~(iii) is
the limit: an encoder that may also combine the two components drops the subject
entirely, using codes such as $(Y+\nu_x)\bmod m$ that are informative only together with
the copy. Cross-passage prediction involves no such competition. The sibling's specific
content cannot be predicted from the anchor, so the subject is the only information worth
encoding. The comparison with Theorem~\ref{app:thm:within} isolates the cause: without
the copy, the within-passage optimum contains the subject term $M\,p(v)$; with it, the
subject is at a disadvantage.

\paragraph{Scope.}
The results describe the optima of the objectives, not training dynamics.
Theorems~\ref{app:thm:within} and~\ref{app:thm:separation} hold in the component model,
and Theorem~\ref{app:thm:separation} abstracts the decoder as predicting the whole target
rather than a token distribution. The analysis concerns the predictive pathway, meaning
the conditioning vector and the decoder. Retrieval in \method{} uses the mean-pooled
$e$, which $\Lxp$ shapes directly. P1 treats passages that share a subject as equally relevant, so the analysis does not
model the value of passage-specific content for ranking among them. We attribute the
contribution of $\mathcal{L}_{\mathrm{self}}$ to this, but the theory does not establish
it.

\subsection{Proofs}
\label{app:proofs}

Throughout, masks, dropout and the teacher use fresh randomness for every text. Texts
take countably many values, $\varphi$ is bounded, and in
Theorem~\ref{app:thm:separation} all variables take finitely many values.

\subsubsection*{The optimal predictor and the optimal ranking}

\begin{lemma}[The optimal predictor is the conditional mean]
\label{app:lem:mean}
Let $\psi^*(a)=\E[\varphi(t)\mid a]$. For every predictor $\psi$:
\begin{enumerate}
\item[\textup{(i)}]
  $\E\|\varphi(t)-\psi(a)\|^2=\E\|\varphi(t)-\psi^*(a)\|^2+\E\|\psi^*(a)-\psi(a)\|^2$;
\item[\textup{(ii)}] if $\varphi(t)$ and $\psi(a)$ are distributions over the vocabulary,
  $\E\,\mathrm{KL}\big(\varphi(t)\,\|\,\psi(a)\big)
   =\E\,\mathrm{KL}\big(\varphi(t)\,\|\,\psi^*(a)\big)
   +\E\,\mathrm{KL}\big(\psi^*(a)\,\|\,\psi(a)\big)$.
\end{enumerate}
In both cases the first term does not depend on $\psi$, and the second vanishes only at
$\psi=\psi^*$.
\end{lemma}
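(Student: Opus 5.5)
Both parts follow from one decomposition: condition on the input $a$, then apply the tower property. Write $\psi^*(a)=\E[\varphi(t)\mid a]$. It is well defined and bounded because $\varphi$ is bounded and texts take countably many values.

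For part (i), expand the square around $\psi^*(a)$:
\[
\|\varphi(t)-\psi(a)\|^2=\|\varphi(t)-\psi^*(a)\|^2+\|\psi^*(a)-\psi(a)\|^2+2\big\langle \varphi(t)-\psi^*(a),\,\psi^*(a)-\psi(a)\big\rangle .
\]
The cross term vanishes in expectation. Conditional on $a$, the vector $\psi^*(a)-\psi(a)$ is fixed. This holds provided $\psi$ is a deterministic function of $a$; if $\psi$ uses its own independent randomness, the same argument applies after also conditioning on that randomness. Meanwhile $\E[\varphi(t)-\psi^*(a)\mid a]=0$ by definition. Taking total expectations then gives the identity in (i).

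For part (ii), let $\varphi(t)$ and $\psi(a)$ be distributions over the vocabulary $V$. Write
\[
\mathrm{KL}(\varphi\,\|\,\psi)=\sum_v \varphi_v\log\varphi_v-\sum_v\varphi_v\log\psi_v(a),
\]
and likewise with $\psi^*$ in place of $\psi$. The entropy terms cancel, so
\[
\mathrm{KL}(\varphi\|\psi)-\mathrm{KL}(\varphi\|\psi^*)=\sum_v\varphi_v(t)\log\frac{\psi^*_v(a)}{\psi_v(a)}.
\]
Condition on $a$ and use linearity in $\varphi$ together with $\E[\varphi_v(t)\mid a]=\psi^*_v(a)$. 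The conditional expectation of this difference is then $\sum_v\psi^*_v(a)\log(\psi^*_v(a)/\psi_v(a))=\mathrm{KL}(\psi^*(a)\|\psi(a))$, and taking outer expectations gives (ii).

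Two convention points need handling, and this is the only real care required. First, $\psi^*(a)$ is a mixture of distributions, so it is itself a distribution. Second, if $\psi_v(a)=0$ where $\psi^*_v(a)>0$, both sides are $+\infty$. If $\psi^*_v(a)=0$, then $\varphi_v(t)=0$ almost surely given $a$, so that coordinate contributes nothing. With the conventions $0\log 0=0$ and $0\log(0/0)=0$, the identity holds in $[0,\infty]$. The first term is finite only if it is finite for $\psi^*$, and I would note this rather than belabor it.

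Finally, the first term on the right of each identity is independent of $\psi$. The second term is nonnegative: a squared norm in (i), and a KL divergence by Gibbs' inequality in (ii). It vanishes iff $\psi(a)=\psi^*(a)$ for almost every $a$, by strict positivity of the norm and the equality case of Gibbs. This yields uniqueness of the minimizer up to null sets, which is the sense of "only at $\psi=\psi^*$". The main obstacle is the measure-theoretic bookkeeping in (ii): the infinite cases and the null-set qualification. The algebra itself is routine.
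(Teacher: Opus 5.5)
Your proof is correct and follows the paper's argument: expand around $\psi^*(a)$ (for (ii), add and subtract $\log\psi^*_v(a)$), then kill the cross term with the tower property, since the remaining factor is a function of $a$ and $\mathbb{E}[\varphi(t)\mid a]=\psi^*(a)$. Your extra bookkeeping is sound and only makes explicit what the paper leaves implicit: randomized predictors, the $+\infty$ cases of the KL, and uniqueness up to null sets. With a finite vocabulary the first KL term is in fact always at most $\log|V|$, so no $\infty-\infty$ issue can arise.
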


\begin{proof}
\proofstep{Part (i).}
\begin{align*}
  \E\|\varphi(t)-\psi(a)\|^2
  &\by{a}{=}\;\E\|\varphi(t)-\psi^*(a)\|^2+\E\|\psi^*(a)-\psi(a)\|^2
     +2\,\E\big[(\varphi(t)-\psi^*(a))^\top(\psi^*(a)-\psi(a))\big]\\
  &\by{b}{=}\;\E\|\varphi(t)-\psi^*(a)\|^2+\E\|\psi^*(a)-\psi(a)\|^2.
\end{align*}
\begin{reasons}
(a) & Add and subtract $\psi^*(a)$, then expand the square. \\
(b) & Tower property: the second factor is a function of $a$, and
      $\E[\varphi(t)-\psi^*(a)\mid a]=0$.
\end{reasons}

\proofstep{Part (ii).}
\begin{align*}
  \E\,\mathrm{KL}\big(\varphi(t)\,\|\,\psi(a)\big)
  &\by{a}{=}\;\E\sum_v\varphi_v(t)\log\frac{\varphi_v(t)}{\psi^*_v(a)}
     +\E\sum_v\varphi_v(t)\log\frac{\psi^*_v(a)}{\psi_v(a)}\\
  &\by{b}{=}\;\E\,\mathrm{KL}\big(\varphi(t)\,\|\,\psi^*(a)\big)
     +\E\sum_v\psi^*_v(a)\log\frac{\psi^*_v(a)}{\psi_v(a)}.
\end{align*}
\begin{reasons}
(a) & Add and subtract $\log\psi^*_v(a)$. \\
(b) & Tower property: the log-ratio is a function of $a$, and
      $\E[\varphi_v(t)\mid a]=\psi^*_v(a)$.
\end{reasons}
The last sum is $\mathrm{KL}(\psi^*(a)\,\|\,\psi(a))$.
\end{proof}

\begin{lemma}[The optimal ranking]
\label{app:lem:rank}
Assume P1. Fix a query $q$ and $N$ candidates $x_1,\dots,x_N$: one relevant passage at
a uniformly random index $J$, and $N-1$ independent draws from the corpus. Then
\begin{equation*}
  P\big(J=j\mid q,x_1,\dots,x_N\big)\;=\;\frac{S^*(q,x_j)}{\sum_{i}S^*(q,x_i)}.
\end{equation*}
Ranking by $S^*$ therefore orders candidates by their posterior probability of being
relevant. In particular, placing the maximizer of $S^*$ first maximizes the probability
that the relevant passage is ranked first.
\end{lemma}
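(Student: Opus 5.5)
The plan is to compute the joint law of the query, the candidates and the relevant index $J$ under P1, and then read off the posterior by Bayes' rule. Write $P(x)$ for the corpus marginal of a passage and $P(\cdot\mid y)$ for the law of a text given subject $y$. By P1, conditional on $J=j$, the candidates other than $x_j$ are independent corpus draws, also independent of $(q,x_j)$. The pair $(q,x_j)$ is generated by drawing $Y\sim\pi$ and then $q$ and $x_j$ independently given $Y$. Since $J$ is uniform,
\begin{equation*}
  P(q,x_1,\dots,x_N,J=j)\;=\;\frac1N\Big(\sum_y \pi_y\,P(q\mid y)\,P(x_j\mid y)\Big)\prod_{i\neq j}P(x_i).
\end{equation*}

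The key step is to rewrite the bracket in terms of posteriors. By Bayes, $P(q\mid y)=p_y(q)P(q)/\pi_y$ and $P(x_j\mid y)=p_y(x_j)P(x_j)/\pi_y$. Here I use the second sentence of P1: the posterior $p(\cdot)$ is the same whether the text is read as a query or as a passage, and the corpus marginal is the mixture $\sum_y\pi_yP(\cdot\mid y)$. Substituting gives
\begin{equation*}
  \sum_y \pi_y\,P(q\mid y)\,P(x_j\mid y)\;=\;P(q)\,P(x_j)\,S^*(q,x_j).
\end{equation*}
The joint therefore equals $\frac1N P(q)\prod_{i=1}^N P(x_i)\cdot S^*(q,x_j)$. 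The prefactor does not depend on $j$. Summing over $j$ and dividing yields the claimed formula $S^*(q,x_j)/\sum_i S^*(q,x_i)$.

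The consequences are then immediate. The posterior relevance probability is a strictly increasing function of $S^*(q,x_j)$, because the denominator is common to all $j$. Sorting by $S^*$ is therefore sorting by posterior relevance. For any rule that places index $\hat j(q,x_{1:N})$ first, the probability of success is $\E[P(J=\hat j\mid q,x_{1:N})]$. This is maximized pointwise by choosing the argmax of $S^*$; with randomized rules, average over the randomness.

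The main obstacle is bookkeeping rather than depth. First, the query may have a different marginal from passages, so I would write $P_Q(q)$ separately and check that it factors out; it does. Second, the densities must make sense: texts are countable, so these are probability mass functions. Third, $S^*$ is well defined because $\pi_y>0$. Care is also needed that "independent draw from the corpus" means the unconditional passage law, so that $P(x_i)$ appears for $i\neq j$.
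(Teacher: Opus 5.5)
Your proof is correct and follows essentially the same route as the paper's. You condition on $J=j$, use independence to pull out $\prod_i P(x_i)$, apply Bayes' rule to turn the likelihood ratio into $S^*(q,x_j)$, normalize over $j$, and conclude pointwise optimality of the argmax. The only difference is cosmetic: you write the full joint including $q$ and apply Bayes to both $q$ and $x_j$, whereas the paper conditions on $q$ throughout, writing $P(x\mid q,\mathrm{rel})=\sum_y p_y(q)\,P(x\mid Y=y)$, and applies Bayes only to the candidate.
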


\begin{proof}
Given $J=j$, the candidates are independent, with $x_j\sim P(\cdot\mid q,\mathrm{rel})$
and $x_i\sim P(\cdot)$ for $i\neq j$. Since $J$ is uniform,
\begin{equation*}
  P\big(J=j\mid q,x_1,\dots,x_N\big)\;\propto\;
  P(x_j\mid q,\mathrm{rel})\prod_{i\neq j}P(x_i)
  \;=\;\Big(\prod_i P(x_i)\Big)\,\frac{P(x_j\mid q,\mathrm{rel})}{P(x_j)}.
\end{equation*}
The prefactor does not depend on $j$. For the ratio,
\begin{align*}
  \frac{P(x\mid q,\mathrm{rel})}{P(x)}
  &\by{a}{=}\;\sum_y p_y(q)\,\frac{P(x\mid Y=y)}{P(x)}
  \;\by{b}{=}\;\sum_y\frac{p_y(q)\,p_y(x)}{\pi_y}\;=\;S^*(q,x).
\end{align*}
\begin{reasons}
(a) & P1: given its subject, the relevant passage is independent of the query, so
      $P(x\mid q,\mathrm{rel})=\sum_y P(Y=y\mid q)\,P(x\mid Y=y)$. \\
(b) & Bayes' rule: $P(x\mid Y=y)/P(x)=p_y(x)/\pi_y$.
\end{reasons}
Normalizing over $j$ gives the formula. A ranking that places index $j$ first is
correct with probability $P(J=j\mid q,x_1,\dots,x_N)$, which is largest at the maximizer
of $S^*$.
\end{proof}

\subsubsection*{Cross-passage and within-passage prediction}

\begin{proof}[Proof of Theorem~\ref{app:thm:cross}]
\begin{align*}
  \psi^*(a)\;=\;\E\big[\varphi(s)\mid a\big]
  &\by{a}{=}\;\sum_y P(Y=y\mid a)\;\E\big[\varphi(s)\mid a,\,Y=y\big]\\
  &\by{b}{=}\;\sum_y p_y(a)\;\E\big[\varphi(s)\mid Y=y\big]\\
  &\by{c}{=}\;\sum_y p_y(a)\,\mu_y\;=\;M\,p(a).
\end{align*}
\begin{reasons}
(a) & Law of total expectation over $Y$. \\
(b) & P2: given $Y$, the sibling is independent of the anchor. \\
(c) & Definitions of $\mu_y$ and $M$.
\end{reasons}
Part~(i) is immediate. 

\begin{align*}
     \psi^*(q)^\top\psi^*(x)\;=\;\sum_{y,y'}p_y(q)\,p_{y'}(x)\,\langle\mu_y,\mu_{y'}\rangle
\end{align*}

For part~(ii), linear independence of the columns of $M$ makes
$M^\top M$ invertible, so $M^+=(M^\top M)^{-1}M^\top$ satisfies $M^+M=I_k$ and
$p(x)=M^+\psi^*(x)$. By P1, this holds for queries and passages as well as anchors. Let
$\Pi=\mathrm{diag}(\pi)$ and $G=(M^+)^\top\Pi^{-1}M^+$. Then
\begin{equation*}
  \psi^*(q)^\top G\,\psi^*(x)
  \;=\;\big(M^+\psi^*(q)\big)^\top\Pi^{-1}\big(M^+\psi^*(x)\big)
  \;=\;p(q)^\top\Pi^{-1}p(x)\;=\;S^*(q,x),
\end{equation*}
and Lemma~\ref{app:lem:rank} completes the proof.
\end{proof}

\begin{proof}[Proof of Remark~\ref{app:rem:copy}]
\begin{align*}
  \E\big[\varphi(s)\mid a,\hat s\big]
  &\by{a}{=}\;\sum_y P(Y=y\mid a,\hat s)\;\E\big[\varphi(s)\mid a,\hat s,\,Y=y\big]\\
  &\by{b}{=}\;\sum_y P(Y=y\mid a,\hat s)\;\E\big[\varphi(s)\mid\hat s,\,Y=y\big],
  \\[2pt]
  P(Y=y\mid a,\hat s)
  &\by{c}{=}\;\frac{p_y(a)\,P(\hat s\mid Y=y)}{\sum_{y'}p_{y'}(a)\,P(\hat s\mid Y=y')}.
\end{align*}
\begin{reasons}
(a) & Law of total expectation over $Y$. \\
(b) & $\hat s$ is generated from $s$ with fresh randomness, so P2 gives
      $(s,\hat s)\perp a\mid Y$. \\
(c) & Bayes' rule, using $\hat s\perp a\mid Y$.
\end{reasons}
Both factors depend on $a$ only through $p(a)$.
\end{proof}

\begin{proof}[Proof of Theorem~\ref{app:thm:within}]
\begin{align*}
  \psi^*(a)\;=\;\E\big[\varphi(t)\mid a\big]
  &\by{a}{=}\;\E\Big[\,\E\big[\varphi(t)\mid a,Y,\nu_t\big]\;\Big|\;a\Big]\\
  &\by{b}{=}\;\E\big[c(Y)+d(\nu_t)\mid a\big]\\
  &\by{c}{=}\;\sum_y p_y(a)\,c(y)+\E\big[d(\nu_t)\mid a\big]\\
  &\by{d}{=}\;\sum_y p_y(a)\,\mu_y-\E\big[d(\nu_t)\big]+\E\big[d(\nu_t)\mid a\big]
  \;=\;M\,p(a)+\Delta(a).
\end{align*}
\begin{reasons}
(a) & Tower property. \\
(b) & \eqref{app:eq:cm}. \\
(c) & Linearity, then total expectation over $Y$. \\
(d) & $\mu_y=c(y)+\E[d(\nu_t)\mid Y=y]=c(y)+\E[d(\nu_t)]$, because $\nu_t\perp Y$;
      also $\sum_y p_y(a)=1$.
\end{reasons}

\proofstep{Part (i).} The anchor is generated from $(Y,\nu_a)$ and randomness of its own
passage, all of which are independent of $\nu_s$. Hence $\nu_s\perp a$, so
$\E[d(\nu_s)\mid a]=\E[d(\nu_s)]$ and $\Delta\equiv0$. The same independence gives
$a\perp s\mid Y$, so P2 holds in the model.

\proofstep{Part (ii).} If $p(v)=p(v')$, then $M\,p(v)=M\,p(v')$, and subtracting the two
expressions for $\psi^*$ leaves the difference of the $\Delta$ terms.
\end{proof}

\subsubsection*{The bottleneck and the decoder's copy}

\begin{lemma}[Information in a masked view]
\label{app:lem:views}
Let $w=(w_1,w_2)$, and let $\hat w$ reveal $w_1$ with probability $\beta_1$ and $w_2$
with probability $\beta_2$, through indicators $E=(E_1,E_2)$ that are independent of each
other and of $(U,w)$. Then
\begin{align}
  I(U;\hat w)&=\beta_1\beta_2\,I(U;w)+\beta_1(1-\beta_2)\,I(U;w_1)+(1-\beta_1)\beta_2\,I(U;w_2),
  \label{app:eq:views}\\
  I(U;w\mid\hat w)&=I(U;w)-I(U;\hat w).
  \label{app:eq:views2}
\end{align}
\end{lemma}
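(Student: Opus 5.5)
The plan is to condition on the mask indicators $E$, which are observed: $\hat w$ determines $E$ (a revealed coordinate is distinguishable from a masked one), so $\hat w$ is in bijection with the pair $(E,\hat w)$. Concretely, $\hat w = (E, w_1\text{ if }E_1=1, w_2\text{ if }E_2=1)$. Then I will write
\begin{equation*}
  I(U;\hat w)=I(U;E)+I(U;\hat w\mid E)=I(U;\hat w\mid E),
\end{equation*}
since $E\perp (U,w)$ gives $I(U;E)=0$. Expanding the conditional mutual information over the four values of $E$, each value contributes its probability times the mutual information between $U$ and the revealed sub-vector. The conditioning on $E=e$ does not change the joint law of $(U,w)$, again by independence. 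The four cases are as follows: $E=(1,1)$ with probability $\beta_1\beta_2$ contributes $I(U;w)$; $E=(1,0)$ contributes $\beta_1(1-\beta_2)I(U;w_1)$; $E=(0,1)$ contributes $(1-\beta_1)\beta_2 I(U;w_2)$; and $E=(0,0)$ reveals nothing, contributing $0$. Summing the four cases gives \eqref{app:eq:views}.

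For \eqref{app:eq:views2}, I will use the chain rule twice on $I(U;w,\hat w)$:
\begin{equation*}
  I(U;w,\hat w)=I(U;\hat w)+I(U;w\mid\hat w)=I(U;w)+I(U;\hat w\mid w).
\end{equation*}
It then suffices to show $I(U;\hat w\mid w)=0$. Given $w$, the view $\hat w$ is a deterministic function of $(w,E)$. Since $E$ is independent of $(U,w)$, we have $E\perp U\mid w$, hence $\hat w\perp U\mid w$. Rearranging yields the identity.

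The only delicate point is the first step: justifying that $\hat w$ reveals $E$, and that conditioning on $E=e$ leaves the law of $(U,w)$ unchanged. I will state this explicitly as the modelling convention for a masked view: the mask token is a symbol outside the alphabet, so each slot either shows its value or shows a mask symbol. Everything else follows from chain rules on finite variables, so no measure-theoretic issues arise, since the paper assumes finitely many values in this part.
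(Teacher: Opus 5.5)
Your proposal is correct and follows the paper's proof essentially step by step. For the first identity you read $E$ off $\hat w$, use $I(U;E)=0$, and average $I(U;\hat w\mid E=e)$ over the four mask patterns; for the second you expand $I(U;w,\hat w)$ by the chain rule both ways and show $I(U;\hat w\mid w)=0$, spelling out that $E\perp(U,w)$ gives $E\perp U\mid w$, which the paper leaves implicit.
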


\begin{proof}
\begin{align*}
  I(U;\hat w)
  \;\by{a}{=}\;I(U;\hat w,E)
  \;\by{b}{=}\;I(U;E)+I(U;\hat w\mid E)
  \;\by{c}{=}\;\sum_e P(E=e)\,I(U;\hat w\mid E=e).
\end{align*}
\begin{reasons}
(a) & $E$ can be read off $\hat w$, since the mask symbol differs from every value. \\
(b) & Chain rule. \\
(c) & $I(U;E)=0$, because $E\perp U$.
\end{reasons}
Conditioning on $E=e$ leaves the law of $(U,w)$ unchanged. The four values of $e$ reveal
$w$, $w_1$, $w_2$ and nothing, respectively, which gives \eqref{app:eq:views}. For
\eqref{app:eq:views2}, the chain rule gives
$I(U;w)+I(U;\hat w\mid w)=I(U;w,\hat w)=I(U;\hat w)+I(U;w\mid\hat w)$. Moreover
$I(U;\hat w\mid w)=0$, because $\hat w$ is a function of $(w,E)$ and $E\perp(U,w)$.
\end{proof}

\begin{lemma}[The loss falls by exactly the reward]
\label{app:lem:reward}
For a decoder that reads $(\hat t,U)$ and outputs $q(\cdot\mid\hat t,U)$,
\begin{equation*}
  \E\big[-\log q(t\mid\hat t,U)\big]
  \;=\;H(t\mid\hat t)-I(U;t\mid\hat t)
  +\E\,\mathrm{KL}\big(P(\cdot\mid\hat t,U)\,\|\,q(\cdot\mid\hat t,U)\big).
\end{equation*}
$H(t\mid\hat t)$ does not depend on the encoder, and the KL term vanishes for the best
decoder.
\end{lemma}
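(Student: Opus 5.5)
The approach is the standard cross-entropy decomposition, done pointwise in the conditioning pair $(\hat t,U)$ and then averaged. Throughout, $P(\cdot\mid\hat t,U)$ denotes the true conditional law of the target given what the decoder reads. Since texts take countably many values and $q$ is a distribution over target values, every quantity below is well defined, provided $q(t\mid\hat t,U)>0$ wherever $P$ puts mass. If that fails, both sides equal $+\infty$, which we treat separately.

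First we split the log loss. Add and subtract $\log P(t\mid\hat t,U)$ inside the expectation:
\[
\E\big[-\log q(t\mid\hat t,U)\big]=\E\big[-\log P(t\mid\hat t,U)\big]+\E\Big[\log\frac{P(t\mid\hat t,U)}{q(t\mid\hat t,U)}\Big].
\]
For the second term we use the tower property, conditioning on $(\hat t,U)$. The inner expectation of the log-ratio under $t\sim P(\cdot\mid\hat t,U)$ is exactly $\mathrm{KL}\big(P(\cdot\mid\hat t,U)\,\|\,q(\cdot\mid\hat t,U)\big)$. This parallels step (b) in part (ii) of Lemma~\ref{app:lem:mean}, but there is a difference. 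Here the target is the text itself rather than its features, so the conditional law replaces the conditional mean.

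The first term is by definition the conditional entropy $H(t\mid\hat t,U)$. We then apply the definition of conditional mutual information, $I(U;t\mid\hat t)=H(t\mid\hat t)-H(t\mid\hat t,U)$. This gives $H(t\mid\hat t,U)=H(t\mid\hat t)-I(U;t\mid\hat t)$, and hence the displayed identity. In Theorem~\ref{app:thm:separation} the variables are finite, so all entropies are finite and the subtraction is legitimate. In the countable case we assume $H(t\mid\hat t)<\infty$, which holds since texts have bounded length over a finite vocabulary.

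For the closing claims, note that $H(t\mid\hat t)$ involves only the joint law of $(t,\hat t)$. That law is fixed by the data and the masking, and $U$ does not enter it, so this term is independent of the encoder. The KL term is nonnegative by Gibbs' inequality. It vanishes exactly when $q(\cdot\mid\hat t,U)=P(\cdot\mid\hat t,U)$ almost surely, and this is the best decoder. Its loss is therefore $H(t\mid\hat t)-R$.

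The only delicate point is measure-theoretic bookkeeping: the finiteness of the entropies and the case where $q$ gives zero mass to an observed $t$. Neither is a genuine obstacle under the stated finiteness assumptions.
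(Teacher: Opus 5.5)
Your proposal is correct and follows the paper's proof. The paper also writes the expected log loss as $H(t\mid\hat t,U)$ plus the expected KL term, then uses the definition of conditional mutual information to rewrite $H(t\mid\hat t,U)$ as $H(t\mid\hat t)-I(U;t\mid\hat t)$; your remarks on finite entropies and on $q$ assigning zero mass to an observed target only make explicit what the paper leaves implicit.
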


\begin{proof}
The expected log loss equals $H(t\mid\hat t,U)$ plus the expected KL term, and
$H(t\mid\hat t,U)=H(t\mid\hat t)-I(U;t\mid\hat t)$ by the definition of conditional
mutual information.
\end{proof}

\begin{proof}[Proof of Theorem~\ref{app:thm:separation}]
Let $\mathrm{Er}(S)$ denote $S$ passed through an erasure channel with erasure
probability $1-B/\log m$. For $S$ uniform on $\mathbb{Z}_m$,
$I(\mathrm{Er}(S);S)=(B/\log m)\,H(S)=B$.

\proofstep{Part (i).} Let $U=U(a)$.
\begin{align*}
  R_{\mathrm{cross}}\;=\;I(U;s\mid\hat s)
  &\by{a}{=}\;I(U;s)-I(U;\hat s)\\
  &\by{b}{=}\;I(U;s)-\beta_Y\beta_\nu\,I(U;s)-\beta_Y(1-\beta_\nu)\,I(U;Y)
     -(1-\beta_Y)\beta_\nu\,I(U;\nu_s)\\
  &\by{c}{=}\;(1-\beta_Y)\,I(U;Y).
\end{align*}
\begin{reasons}
(a) & Equation~\eqref{app:eq:views2}. \\
(b) & Equation~\eqref{app:eq:views} with $w=(Y,\nu_s)$. \\
(c) & $\nu_s\perp(U,Y)$, so $I(U;s)=I(U;Y)+I(U;\nu_s\mid Y)=I(U;Y)$ and
      $I(U;\nu_s)=0$.
\end{reasons}
$Y$ is a function of $a$, so $I(U;Y)\le I(U;a)\le B$. Hence
$R_{\mathrm{cross}}\le(1-\beta_Y)B$, with equality if and only if $I(U;Y)=B$. The
encoder $U=\mathrm{Er}(Y)$ attains the bound, with $I(U;a)=I(U;Y)=B$. So the maximum is
$(1-\beta_Y)B$, and every maximizer has $I(U;Y)=B$.

\proofstep{Parts (ii) and (iii).} Let $U=U(x)$.
\begin{align*}
  R_{\mathrm{within}}\;=\;I(U;x\mid\hat x)
  &\by{a}{=}\;(1-\beta_Y\beta_\nu)\,I(U;x)-\beta_Y(1-\beta_\nu)\,I(U;Y)
     -(1-\beta_Y)\beta_\nu\,I(U;\nu_x).
\end{align*}
\begin{reasons}
(a) & Equations~\eqref{app:eq:views2} and~\eqref{app:eq:views} with $w=(Y,\nu_x)$.
\end{reasons}

\emph{Separable codes (part ii).} The pairs $(U_Y,Y)$ and $(U_\nu,\nu_x)$ are independent,
so $I(U;x)=b_Y+b_\nu$, $I(U;Y)=b_Y$ and $I(U;\nu_x)=b_\nu$. Substituting,
\begin{align*}
  R_{\mathrm{within}}
  &\by{b}{=}\;\big[1-\beta_Y\beta_\nu-\beta_Y(1-\beta_\nu)\big]\,b_Y
     +\big[1-\beta_Y\beta_\nu-(1-\beta_Y)\beta_\nu\big]\,b_\nu\\
  &\by{c}{=}\;(1-\beta_Y)\,b_Y+(1-\beta_\nu)\,b_\nu\\
  &\by{d}{=}\;(1-\beta_\nu)(b_Y+b_\nu)-(\beta_Y-\beta_\nu)\,b_Y
  \;\by{e}{\le}\;(1-\beta_\nu)\,B.
\end{align*}
\begin{reasons}
(b) & Collect the terms in $b_Y$ and $b_\nu$. \\
(c) & Simplify each bracket. \\
(d) & Rearrange. \\
(e) & $b_Y+b_\nu\le B$, and $(\beta_Y-\beta_\nu)\,b_Y\ge0$ when $\beta_Y>\beta_\nu$.
\end{reasons}
Equality in (e) holds if and only if $b_Y=0$ and $b_\nu=B$. The code with constant $U_Y$
and $U_\nu=\mathrm{Er}(\nu_x)$ attains it.

\emph{Unrestricted encoders (part iii).} In (a), all three coefficients are positive,
$I(U;x)\le B$, and mutual information is non-negative. Hence
$R_{\mathrm{within}}\le(1-\beta_Y\beta_\nu)B$, with equality if and only if $I(U;x)=B$
and $I(U;Y)=I(U;\nu_x)=0$. Let $S=(Y+\nu_x)\bmod m$ and $U=\mathrm{Er}(S)$. For every
value of $Y$, $S$ is uniform, so $S\perp Y$; likewise $S\perp\nu_x$. Since $U$ depends
on $x$ only through $S$, $I(U;x)=I(U;S)=B$, while $I(U;Y)\le I(S;Y)=0$ and
$I(U;\nu_x)\le I(S;\nu_x)=0$. The bound is therefore attained, and every maximizer
satisfies the equality conditions.

\proofstep{Part (iv).} Write $z=g(U)$ for the retrieval embedding, with fresh
randomness for each text, and $x^+$ for the relevant passage.

\emph{Maximizer in part (iii).} Since $x^+$ is distributed as a corpus passage, part~(iii)
gives $I(z(x^+);Y)\le I(U(x^+);Y)=0$, so $z(x^+)\perp Y$. The query's specific content
$\nu_q$ is independent of $(z(x^+),Y)$, hence $z(x^+)\perp(Y,\nu_q)=q$. The negatives are
independent of $(q,x^+)$ and have the same law as $x^+$. Given $q$, the $N$ candidate
embeddings are therefore i.i.d., their scores are exchangeable, and with random
tie-breaking the rank of $x^+$ is uniform on $\{1,\dots,N\}$.

\emph{Cross-passage maximizer with $B=\log m$.} Part~(i) gives $I(U;Y)=\log m=H(Y)$, so
$Y=f(U)$ almost surely for some function $f$. Every passage reveals its subject, so
$p(x)$ is the point mass at the subject of $x$ and
$S^*(q,x)=m\cdot\mathbf{1}[Y_q=Y_x]$. Ranking by $\mathbf{1}[f(U(q))=f(U(x))]$ is thus
ranking by $S^*$, which is optimal by Lemma~\ref{app:lem:rank}. P1 holds in the model,
because the query and the relevant passage are independent given $Y$.
\end{proof}

\section{Implementation Details}\label{app:impl}

\begin{table}[h]
  \centering
  \caption{\textbf{Hyperparameters of \method{}} in the two regimes.}
  \label{tab:hparams}
  \small
  \begin{tabular}{p{3.8cm}p{4.4cm}p{4.4cm}}
    \toprule
    & Continual from E5 (default) & From BERT \\
    \midrule
    Initialization & E5-base-unsupervised; MLM head from BERT-base-uncased & BERT-base-uncased \\
    Corpora & 100K, 500K, 1M (one pass) & 100K (six epochs) \\
    Passage length / min.\ tail & 128 / 64 tokens & 128 / 64 tokens \\
    Siblings per anchor used & 4 & 4 \\
    Decoder & fresh 1-layer Transformer $h_\phi$ & same \\
    Conditioning vectors; source & 4; linear map of the mean-pooled embedding & same \\
    Masks (light / heavy) & 15\% / 60\%, nested & same \\
    Prediction losses & cross: masked-token cross-entropy; self: full-vocabulary KL, $\tau=1$ & same \\
    Terms (weights) & $\Lxp$ 1, $\Rcross$ 0.1, $\Rself$ 1, $\Lmlm$ 1, $\Lunif$ 0.01 & same \\
    $\Lxp$ negatives & 511 in global batch, same-document masked & same \\
    Retention regularizer & Eq.~\ref{eq:retention}, weight $40\to0$, $\tau_r=0.075$ & none \\
    Optimizer & AdamW (wd 0.01, $\epsilon=10^{-4}$); Adam for biases/LayerNorm & same \\
    Peak LR; warm-up; decay & $10^{-6}$; 5\% of steps; linear & $10^{-4}$; 68 steps; linear \\
    Batch & 512 & 512 \\
    Backbone freeze & first 100 steps & none \\
    Number of epochs & 1 & 6 \\
    Precision; clipping; seed & bf16; 1.0; 42, 43, 44 & bf16; 1.0; 42, 43, 44 \\
    Hardware & 32 $\times$ AMD MI300X & 8 $\times$ AMD MI300X \\
    \bottomrule
  \end{tabular}
\end{table}

Inputs are \texttt{[CLS] query: $a$ [SEP]} for anchors and \texttt{[CLS] passage: $s$ [SEP]} for siblings. The special tokens and the prefix tokens are never masked. Masked tokens are always replaced by [MASK] (no random or unchanged replacement). One pass visits every group exactly once in a random order that is shared across GPUs. The ablations of Table~\ref{tab:ablation} keep the global batch of 512, without gradient accumulation, on 8, 16 or 32 GPUs.

\section{Group Selection}\label{app:selection}

\paragraph{Scores.} Let $\hat e_i$ be the normalized E5 embedding of passage $i$, computed on \texttt{passage: } followed by the passage, with mean pooling.
\begin{itemize}
  \item \textbf{Centrality.} Passage $i$ of document $d$ has centrality $s_i = \hat e_i^{\top}\bar e_d/\lVert\bar e_d\rVert$, where $\bar e_d$ is the mean of $\hat e_j$ over the passages of $d$, including $i$.
  \item \textbf{Informativeness.} $\iota_i$ is the mean of $\mathrm{idf}(t)=\ln\frac{N+1}{\mathrm{df}(t)+1}+1$ over the unique tokens of $i$. Here $N = 5{,}150{,}388$ is the number of passages in the first million documents and $\mathrm{df}(t)$ counts the passages that contain token $t$.
  \item \textbf{Rank fusion.} Ranks are computed within a document, in descending order, and tied scores share the average rank. The fused score of a candidate is
  \[
    \mathrm{RRF} = \frac{2}{10 + r_{\text{dense}}} + \frac{1}{10 + r_{\text{lex}}}.
  \]
\end{itemize}

\paragraph{Selection.}
\begin{itemize}
  \item \textbf{Anchors} are the passages with $s_i > 0.825$. Up to ten are kept per document, ranked by the fusion of $s_i$ and $\iota_i$; ties are broken by $s_i$, then $\iota_i$, then position.
  \item \textbf{Siblings.} For an anchor $a$, each other passage $c$ is ranked by the fusion of $\hat e_a^{\top}\hat e_c$ and the cosine between the TF-IDF vectors (raw counts $\times$ idf) of $a$ and $c$. The top four are kept.
\end{itemize}

\paragraph{Example.} The group in Figure~\ref{fig:method} comes from a 10-passage blog post about Chromebooks.
\begin{itemize}
  \item The anchor is passage 1, on account synchronization and sandboxing; its centrality is 0.842.
  \item The siblings, in rank order, are passages 2 (operating-system updates and security, continuing the anchor; E5 cosine 0.777), 6 (losing connectivity on a train; 0.750), 7 (cloud-centric machines for businesses; 0.750) and 4 (an audio problem; 0.646).
\end{itemize}

\paragraph{What the filters remove, and what they miss.} The passage least related to a representative anchor is often page furniture. On the page of Figure~\ref{fig:motivation} it is a dictionary widget. On an article about the Ford F-Series it is a list of music-video links, on a policy report on carbon capture a list of press contacts, and on an essay about digital magazines a stock ticker. These passages have an E5 cosine of 0.38--0.45 to the anchor, whereas the most related passage of the same page has 0.85--0.95. They also have low centrality (0.46--0.51) and never become anchors. The anchor filter, however, measures how representative a passage is of its page, not its quality. On a site-policy page, the passages of the legal disclaimer are central (0.95) and pass the filter; the passage least related to them is a list of recent visitors (cosine 0.48). Conversely, the introduction of an interview with a band fails the filter (0.79), presumably because the questions and answers that fill the rest of the page dominate its centroid. Table~\ref{tab:anchor-examples} shows these passages.

\begin{table}[h]
  \centering
  \caption{\textbf{Passages that pass and fail the anchor filter} on seven FineWeb pages ($s$: centrality; $\tau_a = 0.825$; number of passages in parentheses). On the first five pages, the failing passage is also the one least related to the passing passage (E5 cosine 0.38--0.45). The last two pages show what the filter misses: a legal disclaimer passes, and an informative introduction fails. Passages are abridged (ellipses) and shown as the lowercased model input; contact names and phone numbers are elided.}
  \label{tab:anchor-examples}
  \footnotesize
  \begin{tabular}{@{}>{\raggedright\arraybackslash}p{1.7cm}>{\raggedright\arraybackslash}p{5.6cm}>{\raggedright\arraybackslash}p{5.6cm}@{}}
    \toprule
    Page & Passes the filter & Fails the filter \\
    \midrule
    Baseball no-hitter article (47) & ($s=0.94$) \ldots{} a game in which a team allows no runners to reach base at all is a perfect game, as well as being a shutout victory. a perfect game is by definition also a no-hitter. 22 perfect games have been recorded in mlb history \ldots & ($s=0.47$) dictionary and translator for handheld new: sensagent is now available on your handheld \ldots{} with a sensagentbox, visitors to your site can access reliable information on over 5 million pages provided by sensagent.com \ldots \\
    \addlinespace
    Home-energy guide (25) & ($s=0.93$) \ldots{} energy efficiency starts with learning how to recognize a problem, making the right choices, and sometimes changing old habits. this book will help you become more aware of energy use in your home and in the world \ldots & ($s=0.50$) \ldots{} click for more about paul scheckel's important new book deus ex machina by paul freundlich click for more about paul scheckel's important new book \ldots{} copyright \copyright{} 2003-2012 by the public press: the page, all rights reserved \\
    \addlinespace
    Essay on digital magazines (34) & ($s=0.92$) \ldots{} for the past year i've been pushing the theory that the age of tablets will give print media one last bite at the apple -- and publishing companies that are able to make the transition could one day thrive again \ldots & ($s=0.46$) ford motor co \textbar{} 15.08 \textbar{} 0.44 \textbar{} 3.01\% \textbar{} general electric co \textbar{} 23.46 \textbar{} 0.19 \textbar{} 0.82\% \textbar{} glass employees speak openly on public concerns more \ldots \\
    \addlinespace
    Policy report on coal and carbon capture (25) & ($s=0.94$) \ldots{} although an emission performance standard that requires ccs systems for all new coal plants would pose a daunting technological and economic challenge, it will ultimately assure coal a secure and important role in the future u.s.\ energy mix \ldots & ($s=0.51$) \ldots{} print: [name] (women's issues, poverty, legal progress) [phone] or email@example.com print: [name] (spanish language and ethnic media) [phone] \ldots{} tv: \ldots{} radio: \ldots{} web: \ldots \\
    \addlinespace
    Article on the Ford F-Series (50) & ($s=0.91$) \ldots{} the f-series is a series of full-size pickup trucks from ford motor company which has been sold continuously for over six decades. the most popular variant of the f-series is the f-150 \ldots & ($s=0.46$) \ldots{} available to download on itunes including remixes of: tiesto, hard rock sofa \& deniz koyu http://smarturl.it/\ldots{} macklemore \& ryan lewis present the official music video for can't hold us \ldots \\
    \midrule
    \multicolumn{3}{@{}l}{\emph{What the filter misses}} \\
    \midrule
    Site policy (23) & ($s=0.95$) \ldots{} this site is controlled and operated by thesunsetchart.com from its offices within the state of california, united states of america. thesunsetchart.com makes no representation that materials in the site are appropriate or available for use in other locations \ldots & ($s=0.57$) some of our recent visitors\ldots{} bellevue, waunited states mountain view, caunited states seattle, waunited states \ldots{} copyright 2013 intelligenesis, all rights reserved \ldots \\
    \addlinespace
    Interview with a band (22) & ($s=0.91$) \ldots{} one of the things i admire most about cocore is the spirit of experimentation. can you describe your relationship to experimental music? who or what inspires you to push the limits? \ldots & ($s=0.79$) cocore is one of south korea's first and greatest indie rock bands. in the beginning of their career, they made a name for themselves by covering nirvana and penning their own grunge anthems \ldots \\
    \bottomrule
  \end{tabular}
\end{table}

\section{Full Objective}\label{app:objective}

The expressions for $\Lcross$ and $\Lself$ in Sections~2.3 and~2.4 are written for a single group. Over a batch of $B$ groups $g_b=(a_b,s_{b1},\dots,s_{bn})$, each is averaged over masked tokens rather than over groups. Let $S(y\mid x)=|M_H(y)|\,C(y\mid x)$ be the cross-entropy of Eq.~\ref{eq:rcross} summed over the masked positions of $y$, and $S_D(y)=|M_H(y)|\,D(y)$ the summed KL divergence of Eq.~\ref{eq:rself}.
\begin{itemize}
  \item \textbf{Cross-passage supervision.}
  \[
    \Lcross = \frac12\left[
      \frac{\sum_{b}\sum_{j=1}^{n} S(a_b\mid s_{bj})}{n\sum_{b}|M_H(a_b)|}
      + \frac{\sum_{b}\sum_{j=1}^{n} S(s_{bj}\mid a_b)}{\sum_{b}\sum_{j=1}^{n}|M_H(s_{bj})|}
    \right].
  \]
  Each passage has a single heavy view, so the $n$ predictions of an anchor share its masked positions; hence the factor $n$ in the first denominator.
  \item \textbf{Within-passage supervision.}
  \[
    \Lself = \frac12\left[
      \frac{\sum_{b} S_D(a_b)}{\sum_{b}|M_H(a_b)|}
      + \frac{\sum_{b} S_D(s_{b1})}{\sum_{b}|M_H(s_{b1})|}
    \right].
  \]
  \item \textbf{Relation to Section~\ref{sec:pred}.} If every heavy view masks the same number of tokens, both expressions reduce to the batch mean of the per-group terms in Section~\ref{sec:pred}. Passages have 64 to 128 tokens (Appendix~\ref{app:impl}), so the two differ only through shorter trailing passages.
  \item \textbf{MLM.} $\Lmlm$ is the cross-entropy of the MLM head on the lightly masked positions of $a$ and $s_1$.
  \item \textbf{Uniformity.} $\Lunif$ is computed over the anchors on each device.
\end{itemize}
All prediction losses are computed in fp32 from bf16 logits.
\section{Retention Regularizer}\label{app:retention}

In the continual setting, every method adds
\begin{equation}
  \Lret = \tfrac12\Big[\mathrm{KL}\big(\sigma(R_1R_2^{\top}/\tau_r)\,\big\|\,\sigma(E_1E_2^{\top}/\tau_r)\big) + \mathrm{KL}\big(\sigma(R_2R_1^{\top}/\tau_r)\,\big\|\,\sigma(E_2E_1^{\top}/\tau_r)\big)\Big],
  \label{eq:retention}
\end{equation}
where $E_1,E_2$ are normalized current embeddings of two sets of passages in the batch, $R_1,R_2$ are the corresponding embeddings of the frozen initial model, $\sigma$ is a row-wise softmax, and $\tau_r = 0.075$. Its weight decays as $\lambda_{\text{ret}}(t) = 40\,\big(1-\sqrt{(t-1)/(T-1)}\big)$ over the $T$ training steps, which protects the initial model early and lets the pre-training objective dominate late. Equation~\ref{eq:retention} is applied to every E5-initialized run and to no BERT-initialized run.
\begin{itemize}
  \item \textbf{\method{}.} $E_1,E_2$ are the mean-pooled embeddings of the lightly masked anchor and top sibling. $R_1,R_2$ are computed online by a frozen copy of E5 on the unmasked, prefixed passages. Similarity matrices are gathered across GPUs ($512\times512$).
  \item \textbf{Single-passage baselines.} The reference embeddings of the anchors are precomputed with the same frozen model on the same prefixed passages and stored in fp16. Both sets are the anchors of the batch.
  \item \textbf{Schedule.} For all methods, the weight halves after a quarter of training and reaches 0 at the last step.
\end{itemize}

\section{Baselines}\label{app:baselines}

\begin{table}[h]
  \centering
  \caption{\textbf{Baseline configurations.} Every baseline uses its official implementation, the same groups (the anchor passage; for Revela, the anchor with its siblings and anchors of other documents) and the same steps, batch size, learning-rate schedule and, in the continual setting, retention regularizer as \method{}.}
  \label{tab:baselines}
  \small
  \begin{tabular}{p{1.8cm}p{7.6cm}p{3.2cm}}
    \toprule
    Method & Native objective and settings & Passages per group \\
    \midrule
    RetroMAE & encoder MLM at 30\% masking; enhanced one-layer decoder reconstructs at 50\% & 1 (anchor) \\
    DupMAE & RetroMAE plus bag-of-words prediction from token states (weight 0.1) & 1 (anchor) \\
    SimLM & replaced-language modeling with a frozen ELECTRA-base generator; masks 30\% / 50\%; two-layer decoder; generator MLM weight 0.2 & 1 (anchor) \\
    BoW-DPR & encoder MLM at 30\%; bag-of-words prediction (coefficient 1.0, cosine-decayed to 0.1); head mask 15\% & 1 (anchor) \\
    Revela & next-token prediction by Qwen2.5-0.5B (LoRA, rank 256, $\alpha=64$) with in-batch attention weighted by the retriever's similarities; 128-token passages & 16 (anchor, 4 siblings, 11 anchors of other documents) \\
    \method{} & Eq.~\ref{eq:objective} & 5 (anchor + 4 siblings) \\
    \bottomrule
  \end{tabular}
\end{table}

\section{Extended Related Work}\label{app:related}

\paragraph{Contrastive pre-training for retrieval.} Without labels, positive pairs have been built with the inverse cloze task \citep{lee2019ict}, body--first-section and wiki-link pairs \citep{chang2020pretraining}, recurring spans \citep{ram2022spider} and random cropping \citep{izacard2022contriever}, and combined with corpus-aware pre-training on spans of the same document \citep{gao2022cocondenser}; large-scale weakly supervised pairs, filtered with a consistency check, underlie today's general-purpose embedders \citep{wang2022e5,li2023gte}. \method{} shares this contrastive form and changes the pair. Spider pairs passages of a document that share a recurring span, a lexical notion of relatedness; \method{} pairs an anchor that must represent its document with the passage most related to it by fused dense and lexical similarity, and every document contributes several such pairs. Our ablations show that this selection matters (\S\ref{sec:ablation}), and \method{} adds a signal on top of a model pre-trained contrastively on weakly supervised pairs rather than replacing such pre-training (\S\ref{sec:main}).

\paragraph{Bottleneck auto-encoding.} Weak decoders force information into a single vector in SEED \citep{lu2021seed}, Condenser \citep{gao2021condenser}, RetroMAE \citep{xiao2022retromae} and its duplex extension DupMAE \citep{liu2023retromae2}, SimLM with replaced-token detection \citep{wang2023simlm}, and LexMAE with a lexicon bottleneck \citep{shen2023lexmae}; BoW-DPR drops the decoder in favor of bag-of-words prediction \citep{ma2024bowdpr}. The closest work is CoT-MAE \citep{wu2023cotmae}, which reconstructs a span with the help of the embedding of its adjacent span. \method{} uses prediction differently: as content supervision alongside a contrastive objective, with targets drawn from related passages anywhere in the document.

\paragraph{Learning from related passages.} Revela \citep{cai2026revela} trains a retriever through language modeling: a language model predicts each passage of a batch while attending to the other passages, weighted by the retriever's similarities, so the retriever learns which passages help to predict a given one. \method{} shares the premise that related passages carry a training signal, but it selects the related passages within a document, makes the retriever's own embedding align with them and predict them, and needs no language model. Under our protocol, Revela does not improve E5 (\S\ref{sec:main}) and, from BERT, stays below SimLM (\S\ref{sec:results}).

\paragraph{Predictive self-distillation.} Predicting a model's own targets from a privileged view underlies self-supervised learning in vision and speech \citep{grill2020byol,caron2021dino,baevski2022data2vec,assran2023ijepa}. \method{} uses such targets in its within-passage term $\Lself$, where the prediction must pass through the passage's own embedding; its cross-passage term $\Lcross$ instead predicts the actual masked tokens of a related passage.

\paragraph{Conditioning on continuous vectors.} Prefix and prompt tuning condition a frozen or shared model on a few continuous vectors \citep{li2021prefix,lester2021prompt}. Gist tokens, AutoCompressor and ICAE compress contexts into such vectors for later generation \citep{mu2023gist,chevalier2023autocompressor,ge2024icae}. \method{} uses the same interface as a training-time bottleneck and asks the vectors to support prediction of other text rather than reconstruction of the compressed one.

\paragraph{Continual pre-training and forgetting.} Continued pre-training adapts models to new data \citep{gururangan2020dontstop} but can erase prior capabilities \citep{kirkpatrick2017ewc,li2016lwf}. Our retention regularizer follows similarity-preserving and relational distillation \citep{tung2019similarity,park2019relational}: it preserves the initial model's in-batch similarity structure and is annealed away during training.

\section{Additional Analysis}\label{app:analysis}

\paragraph{Examples.} Table~\ref{tab:examples} shows six queries, each from a different debate, among the 16 for which \method{} ranks the counter-argument first and both E5 and BoW-DPR rank it fifth or lower. In each, E5's first passage is about the subject of the query but does not answer it: it is another argument of the same debate (drugs, tourism in Tunisia, the International Criminal Court), the counter-argument to another point (assisted suicide), or an argument from a neighboring debate (EU foreign policy, aid for accepting migrants). \method{}'s first passage responds to the query's point directly, down to its example of an embassy in Djibouti or its tourist numbers for 2011. This is the kind of relation that training on related passages of a page rewards: what a passage is about, and which passages address it.

\begin{table}[h]
  \centering
  \caption{\textbf{ArguAna queries for which \method{} ranks the counter-argument first and E5 does not.} Six of the 16 queries for which \method{} ranks the counter-argument first and both E5 and BoW-DPR rank it fifth or lower, each from a different debate (topic in italics). Passages are abridged from the corpus (ellipses); ranks come from the full-precision re-encoding described in the text.}
  \label{tab:examples}
  \footnotesize
  \begin{tabular}{@{}p{1.75cm}p{11.6cm}@{}}
    \toprule
    \multicolumn{2}{@{}p{13.35cm}@{}}{\emph{Legalization of drugs.} Rank of the counter-argument: E5 9, BoW-DPR 15, \method{} 1.} \\
    \midrule
    Query & Legalisation reduces crime. The illegality of drugs fuels a huge amount of crime that could be eliminated if drugs were legalised. Price controls would mean that addicts would no longer have to steal to fund their habits \ldots \\
    E5, first & More people will take drugs if they are legal. Considering that drug use has so many negative consequences, it would be disastrous to have it increase \ldots \\
    \method{}, first & Drugs affect how people think, and they take away their ability to control their actions rationally, and so people on drugs are more likely to commit crimes \ldots\ ``Most drug crimes aren't committed by people trying to pay for drugs; they're committed by people on drugs.'' \\
    \midrule
    \multicolumn{2}{@{}p{13.35cm}@{}}{\emph{UK influence and EU membership.} Rank of the counter-argument: E5 10, BoW-DPR 19, \method{} 1.} \\
    \midrule
    Query & The EU is a force multiplier. The UK gets more bang for the buck as a result of being a member of the EU. It has representation in more countries \ldots\ For example the EU have representation in Djibouti \ldots \\
    E5, first & \emph{(from a debate on a common EU foreign policy)} The position of High Representative will be, and has been, largely powerless \ldots \\
    \method{}, first & It is undeniable that in some areas the EU is a force multiplier. But many of the issues it uses this leverage on are not areas of concern to a UK that has left the EU \ldots\ could the UK have an embassy in Djibouti? Certainly if it wished \ldots \\
    \midrule
\multicolumn{2}{@{}p{13.35cm}@{}}{\emph{Whether Tunisia should rely on tourism.} Rank of the counter-argument: E5 8, BoW-DPR 7, \method{} 1.} \\
    \midrule
    Query & Vulnerable to unrest. Relying on tourism ensures that the economy is at the mercy of unrest. The violence and break down in law and order following the Tunisian revolution resulted in a notable decrease in tourists \ldots\ from 6,487,000 in 2010 to 4,456,000 in 2011 \ldots \\
    E5, first & Investment. Tourism should be relied upon for economic growth as it attracts significant foreign investment. Tourism is the largest form of foreign currency income \ldots \\
    \method{}, first & The long term affects that instability have on industries such as tourism is overstated. \ldots\ tourism has recovered from the low point of 2011. In the first ten months of 2013 Tunisia attracted 5.5 million tourists \ldots \\
    \midrule
\multicolumn{2}{@{}p{13.35cm}@{}}{\emph{An enforcement arm for the ICC.} Rank of the counter-argument: E5 5, BoW-DPR 12, \method{} 1.} \\
    \midrule
    Query & ICC enforcement would create resentment. \ldots\ it would be a foreign force, that may or may not be seen as legitimate, attempting to arrest a native of that country. The result would be resentment in the community at the intrusion \ldots \\
    E5, first & An ICC Enforcement arm would bring in a higher proportion of defendants in to trial. Eight out of the thirty people indicted by the ICC \ldots\ are still alive and avoiding justice \ldots \\
    \method{}, first & \ldots\ Often a problem with arresting wanted international criminals is their support in the community \ldots\ The national government's enforcement arm may be as unwelcome as the ICC \ldots \\
    \midrule
\multicolumn{2}{@{}p{13.35cm}@{}}{\emph{Legalization of assisted suicide.} Rank of the counter-argument: E5 7, BoW-DPR 11, \method{} 1.} \\
    \midrule
    Query & It is vital that a doctor's role not be confused. The guiding principle of medical ethics is to do no harm \ldots\ assisted suicide also places an unreasonable burden on doctors \ldots \\
    E5, first & \emph{(the counter-argument to another point)} However, the idea that we should not kill is not absolute \ldots\ Concerning the notion that legalised voluntary euthanasia might lead to involuntary euthanasia being carried out, there is no evidence to suggest this \ldots \\
    \method{}, first & At the moment, doctors are often put into an impossible position. A good doctor will form close bonds with their patients \ldots\ when a patient \ldots\ expresses a strong desire to die, they are legally unable to help \ldots \\
    \midrule
\multicolumn{2}{@{}p{13.35cm}@{}}{\emph{An absolute right to asylum.} Rank of the counter-argument: E5 107, BoW-DPR 9, \method{} 1.} \\
    \midrule
    Query & The Whole System is broken. It is not clear that the system works at all. The majority of those who apply for asylum are working-age males, which implies that there is a strong economic angle \ldots \\
    E5, first & \emph{(from a debate on offering aid to developing countries that accept migrants)} Migrants will simply return to the countries they have been sent from \ldots \\
    \method{}, first & Much of the fear of the asylum system being used by economic migrants is simply media hysteria and xenophobia. The vast majority of asylum claims (in the UK around 75\%) are still rejected, which shows the system works \ldots \\
    \bottomrule
  \end{tabular}
\end{table}

\section{Additional Results}\label{app:results}

Table~\ref{tab:retrieval-metrics} complements Table~\ref{tab:main} with win/loss counts and retrieval metrics beyond nDCG@10, and Tables~\ref{tab:retrieval-scale}--\ref{tab:pertask-bert} give per-task scores.

\begin{table}[h]
  \centering
  \caption{\textbf{Average score, win/loss counts and retrieval metrics} (continual from E5; $\times 100$). $\Delta$ is the change in Avg.\ with respect to E5, computed before rounding, and W/L counts the tasks on which a model is better/worse than E5. Retrieval metrics are averaged over the ten retrieval tasks.}
  \label{tab:retrieval-metrics}
  \small
  \setlength{\tabcolsep}{4.5pt}
  \begin{tabular}{llccccccc}
    \toprule
    & & \multicolumn{3}{c}{MTEB(eng, v2), 41 tasks} & \multicolumn{4}{c}{Retrieval, 10 tasks} \\
    \cmidrule(lr){3-5}\cmidrule(lr){6-9}
    Corpus & Method & Avg. & $\Delta$ & W/L & nDCG@10 & MRR@10 & R@10 & R@100 \\
    \midrule
    -- & E5 (initial) & 53.06 & -- & -- & 43.26 & 51.25 & 43.31 & 65.25 \\
    \midrule
    \multirow{6}{*}{100K}
       & RetroMAE & 52.92 & $-0.14$ & 17/24 & 42.41 & 50.34 & 42.82 & 65.05 \\
       & DupMAE & 52.97 & $-0.09$ & 20/21 & 42.44 & 50.34 & 42.82 & 65.06 \\
       & SimLM & 51.01 & $-2.05$ & 18/23 & 36.93 & 46.65 & 36.26 & 56.56 \\
       & BoW-DPR & 53.14 & $+0.08$ & 23/18 & 43.11 & 51.95 & 42.92 & 64.16 \\
       & Revela & 53.02 & $-0.05$ & 23/18 & 42.99 & 50.63 & 43.07 & 65.28 \\
       & \method{} & \textbf{54.86} & $\mathbf{+1.80}$ & \textbf{34/7} & \textbf{45.78} & \textbf{53.88} & \textbf{46.05} & \textbf{67.50} \\
    \midrule
    \multirow{6}{*}{500K}
       & RetroMAE & 52.99 & $-0.07$ & 22/19 & 42.38 & 50.14 & 42.85 & 65.03 \\
       & DupMAE & 53.00 & $-0.06$ & 21/20 & 42.32 & 50.22 & 42.78 & 65.06 \\
       & SimLM & 53.32 & $+0.26$ & 26/15 & 42.01 & 49.54 & 41.26 & 63.95 \\
       & BoW-DPR & 53.44 & $+0.38$ & 27/14 & 43.38 & 51.09 & 42.91 & 65.14 \\
       & Revela & 52.89 & $-0.17$ & 14/27 & 42.66 & 50.16 & 42.83 & 65.17 \\
       & \method{} & \textbf{54.93} & $\mathbf{+1.87}$ & \textbf{36/5} & \textbf{45.67} & \textbf{53.78} & \textbf{46.00} & \textbf{67.42} \\
    \midrule
    \multirow{6}{*}{1M}
       & RetroMAE & 52.97 & $-0.09$ & 22/19 & 41.94 & 49.66 & 42.73 & 64.94 \\
       & DupMAE & 52.94 & $-0.12$ & 22/19 & 41.99 & 49.82 & 42.68 & 64.94 \\
       & SimLM & 51.36 & $-1.70$ & 11/30 & 37.93 & 46.33 & 37.49 & 59.19 \\
       & BoW-DPR & 53.02 & $-0.04$ & 23/18 & 42.36 & 50.30 & 41.81 & 64.39 \\
       & Revela & 52.80 & $-0.26$ & 16/25 & 42.47 & 49.51 & 42.89 & 65.31 \\
       & \method{} & \textbf{54.89} & $\mathbf{+1.83}$ & \textbf{35/6} & \textbf{45.33} & \textbf{53.53} & \textbf{45.81} & \textbf{67.56} \\
    \bottomrule
  \end{tabular}
\end{table}

\begin{table}[h]
  \centering
  \caption{\textbf{Retrieval (nDCG@10, $\times 100$) per task across scales} (continual from E5).}
  \label{tab:retrieval-scale}
  \small
  \begin{tabular}{lcccc}
    \toprule
    Task & E5 (initial) & \method{} 100K & \method{} 500K & \method{} 1M \\
    \midrule
    ArguAna & 40.36 & 46.12 & 45.77 & 46.04 \\
    CQADupstack-Gaming & 53.45 & 56.06 & 56.12 & 56.28 \\
    CQADupstack-Unix & 35.26 & 37.00 & 37.00 & 36.50 \\
    Climate-FEVER-HN & 16.48 & 20.06 & 20.47 & 20.94 \\
    FEVER-HN & 65.58 & 69.65 & 69.83 & 69.78 \\
    FiQA-2018 & 40.00 & 38.87 & 38.30 & 37.71 \\
    HotpotQA-HN & 56.18 & 59.19 & 59.19 & 58.87 \\
    SCIDOCS & 21.35 & 21.84 & 21.77 & 21.80 \\
    TREC-COVID & 64.39 & 62.96 & 62.52 & 60.90 \\
    Touch\'e-2020 v3 & 39.51 & 46.08 & 45.72 & 44.51 \\
    \midrule
    Average & 43.26 & 45.78 & 45.67 & 45.33 \\
    \bottomrule
  \end{tabular}
\end{table}

\begin{table}[p]
  \centering
  \caption{\textbf{Per-task main scores at 100K} (continual from E5; $\times 100$).}
  \label{tab:pertask}
  \scriptsize
  \setlength{\tabcolsep}{4pt}
  \begin{tabular}{llccccccc}
    \toprule
    Type & Task & E5 & RetroMAE & DupMAE & SimLM & BoW-DPR & Revela & \method{} \\
    \midrule
    Class. & AmazonCounterfactual & 64.22 & 63.82 & 63.85 & 67.08 & 65.76 & 64.54 & 66.34 \\
     & Banking77 & 75.81 & 75.71 & 75.70 & 76.64 & 76.48 & 75.83 & 77.07 \\
     & Imdb & 66.15 & 68.22 & 68.08 & 60.17 & 61.01 & 67.24 & 64.66 \\
     & MTOPDomain & 62.54 & 62.57 & 62.64 & 63.37 & 63.41 & 62.55 & 63.24 \\
     & MassiveIntent & 27.35 & 27.33 & 27.31 & 29.62 & 28.89 & 27.49 & 28.83 \\
     & MassiveScenario & 34.38 & 34.35 & 34.32 & 36.46 & 35.91 & 34.44 & 35.62 \\
     & ToxicConversations & 63.18 & 63.32 & 63.35 & 64.25 & 65.00 & 64.03 & 65.35 \\
     & TweetSentimentExtraction & 52.11 & 52.35 & 52.37 & 52.17 & 52.55 & 52.68 & 52.82 \\
    \midrule
    Clust. & ArXivHierarchicalP2P & 61.97 & 61.64 & 61.37 & 55.14 & 59.79 & 61.38 & 61.61 \\
     & ArXivHierarchicalS2S & 58.83 & 58.52 & 58.52 & 57.33 & 58.34 & 58.94 & 60.19 \\
     & BiorxivP2P.v2 & 41.58 & 41.49 & 41.22 & 28.21 & 38.68 & 41.89 & 41.26 \\
     & MedrxivP2P.v2 & 37.13 & 37.12 & 36.87 & 29.69 & 33.90 & 36.91 & 37.57 \\
     & MedrxivS2S.v2 & 35.67 & 35.03 & 35.18 & 33.34 & 34.42 & 35.68 & 35.68 \\
     & StackExchange.v2 & 55.04 & 55.10 & 55.39 & 47.32 & 50.76 & 55.23 & 57.11 \\
     & StackExchangeP2P.v2 & 39.89 & 39.71 & 40.02 & 34.36 & 38.87 & 39.54 & 42.46 \\
     & TwentyNewsgroups.v2 & 38.37 & 36.96 & 38.59 & 31.53 & 33.86 & 37.35 & 38.07 \\
    \midrule
    Pair & SprintDuplicateQuestions & 90.77 & 90.95 & 90.95 & 93.94 & 92.58 & 90.91 & 91.25 \\
     & TwitterSemEval2015 & 62.76 & 62.78 & 62.78 & 63.55 & 63.24 & 62.81 & 63.77 \\
     & TwitterURLCorpus & 84.03 & 84.00 & 84.01 & 84.51 & 84.40 & 83.77 & 84.48 \\
    \midrule
    Rerank & AskUbuntuDupQuestions & 57.99 & 57.68 & 57.66 & 57.42 & 57.62 & 57.96 & 57.97 \\
     & MindSmallReranking & 31.62 & 31.73 & 31.72 & 31.16 & 31.58 & 31.97 & 31.70 \\
    \midrule
    Retr. & ArguAna & 40.36 & 39.48 & 39.56 & 22.50 & 38.89 & 39.44 & 46.12 \\
     & CQADupstack-Gaming & 53.45 & 53.36 & 53.45 & 50.76 & 51.75 & 53.28 & 56.06 \\
     & CQADupstack-Unix & 35.26 & 34.99 & 35.12 & 33.62 & 34.74 & 35.55 & 37.00 \\
     & Climate-FEVER-HN & 16.48 & 15.72 & 15.68 & 15.51 & 16.90 & 14.55 & 20.06 \\
     & FEVER-HN & 65.58 & 63.79 & 63.65 & 49.52 & 63.11 & 64.92 & 69.65 \\
     & FiQA-2018 & 40.00 & 39.80 & 39.57 & 27.88 & 39.46 & 39.50 & 38.87 \\
     & HotpotQA-HN & 56.18 & 55.80 & 55.84 & 60.79 & 59.52 & 57.02 & 59.19 \\
     & SCIDOCS & 21.35 & 21.38 & 21.37 & 19.49 & 21.13 & 21.48 & 21.84 \\
     & TREC-COVID & 64.39 & 60.95 & 61.16 & 57.39 & 65.78 & 64.72 & 62.96 \\
     & Touch\'e-2020 v3 & 39.51 & 38.80 & 38.97 & 31.79 & 39.87 & 39.42 & 46.08 \\
    \midrule
    STS & BIOSSES & 71.97 & 73.21 & 73.33 & 82.63 & 79.14 & 71.49 & 81.35 \\
     & SICK-R & 68.68 & 68.86 & 68.89 & 67.97 & 69.02 & 68.46 & 70.04 \\
     & STS12 & 57.84 & 57.80 & 57.84 & 57.54 & 58.04 & 57.93 & 59.50 \\
     & STS13 & 73.46 & 73.92 & 73.90 & 74.97 & 74.37 & 73.17 & 73.71 \\
     & STS14 & 64.04 & 64.28 & 64.30 & 65.68 & 65.08 & 63.73 & 65.99 \\
     & STS15 & 75.44 & 76.02 & 76.04 & 79.98 & 78.06 & 74.87 & 78.43 \\
     & STS17 & 47.57 & 47.62 & 47.61 & 48.47 & 48.09 & 47.71 & 50.50 \\
     & STS22.v2 & 40.74 & 40.61 & 40.67 & 43.84 & 45.42 & 40.15 & 47.58 \\
     & STSBenchmark & 70.51 & 70.95 & 70.96 & 75.81 & 73.78 & 70.99 & 74.53 \\
    \midrule
    Summ. & SummEval.v2 & 31.46 & 32.19 & 32.13 & 28.10 & 29.57 & 32.13 & 32.87 \\
    \bottomrule
  \end{tabular}
\end{table}

\begin{table}[p]
  \centering
  \caption{\textbf{Per-task main scores when pre-training from BERT} (100K groups, six epochs; $\times 100$; \method{}: mean of three seeds).}
  \label{tab:pertask-bert}
  \scriptsize
  \setlength{\tabcolsep}{4pt}
  \begin{tabular}{llcccccc}
    \toprule
    Type & Task & RetroMAE & DupMAE & SimLM & BoW-DPR & Revela & \method{} \\
    \midrule
    Class. & AmazonCounterfactual & 69.80 & 69.45 & 67.92 & 69.77 & 63.41 & 59.99 \\
     & Banking77 & 56.93 & 53.69 & 60.24 & 59.73 & 60.37 & 65.69 \\
     & Imdb & 60.84 & 59.13 & 65.00 & 62.73 & 63.69 & 61.56 \\
     & MTOPDomain & 54.67 & 53.28 & 52.57 & 53.88 & 52.19 & 54.30 \\
     & MassiveIntent & 24.01 & 22.75 & 21.32 & 22.57 & 21.98 & 24.11 \\
     & MassiveScenario & 29.21 & 28.18 & 26.82 & 27.80 & 27.73 & 29.64 \\
     & ToxicConversations & 66.22 & 66.48 & 68.94 & 70.36 & 67.49 & 65.44 \\
     & TweetSentimentExtraction & 49.75 & 48.34 & 53.37 & 52.87 & 53.15 & 49.94 \\
    \midrule
    Clust. & ArXivHierarchicalP2P & 54.52 & 54.09 & 56.92 & 56.07 & 55.96 & 57.44 \\
     & ArXivHierarchicalS2S & 52.67 & 51.42 & 54.36 & 53.62 & 52.62 & 54.02 \\
     & BiorxivP2P.v2 & 34.43 & 32.26 & 35.00 & 33.28 & 33.22 & 36.86 \\
     & MedrxivP2P.v2 & 31.66 & 30.52 & 32.63 & 31.36 & 31.84 & 34.87 \\
     & MedrxivS2S.v2 & 27.42 & 26.89 & 28.33 & 28.10 & 28.64 & 32.18 \\
     & StackExchange.v2 & 42.62 & 42.10 & 50.35 & 48.42 & 53.63 & 57.91 \\
     & StackExchangeP2P.v2 & 33.09 & 32.44 & 34.97 & 33.70 & 34.19 & 39.05 \\
     & TwentyNewsgroups.v2 & 19.23 & 16.59 & 28.76 & 28.30 & 32.16 & 41.97 \\
    \midrule
    Pair & SprintDuplicateQuestions & 76.48 & 73.13 & 74.55 & 72.23 & 73.26 & 90.27 \\
     & TwitterSemEval2015 & 60.04 & 60.00 & 60.69 & 60.91 & 58.77 & 57.88 \\
     & TwitterURLCorpus & 80.31 & 79.00 & 79.41 & 80.56 & 77.60 & 80.43 \\
    \midrule
    Rerank & AskUbuntuDupQuestions & 49.66 & 49.71 & 48.87 & 50.31 & 49.54 & 51.50 \\
     & MindSmallReranking & 27.54 & 26.56 & 28.33 & 28.19 & 29.31 & 30.04 \\
    \midrule
    Retr. & ArguAna & 33.49 & 31.61 & 41.29 & 35.58 & 28.49 & 40.69 \\
     & CQADupstack-Gaming & 25.45 & 25.14 & 29.51 & 28.39 & 25.75 & 36.05 \\
     & CQADupstack-Unix & 14.84 & 13.99 & 16.13 & 14.43 & 13.58 & 20.32 \\
     & Climate-FEVER-HN & 8.90 & 8.44 & 8.97 & 8.99 & 7.17 & 14.54 \\
     & FEVER-HN & 20.59 & 18.00 & 19.30 & 18.08 & 14.17 & 45.00 \\
     & FiQA-2018 & 7.21 & 5.90 & 7.64 & 6.62 & 6.44 & 16.19 \\
     & HotpotQA-HN & 28.66 & 27.95 & 28.84 & 23.96 & 30.52 & 43.03 \\
     & SCIDOCS & 8.81 & 8.31 & 8.82 & 7.65 & 6.96 & 13.27 \\
     & TREC-COVID & 35.11 & 35.15 & 29.24 & 30.57 & 27.11 & 31.76 \\
     & Touch\'e-2020 v3 & 13.42 & 14.42 & 17.10 & 16.10 & 14.30 & 31.29 \\
    \midrule
    STS & BIOSSES & 64.31 & 64.82 & 67.90 & 61.37 & 65.10 & 74.24 \\
     & SICK-R & 61.51 & 61.21 & 61.48 & 64.55 & 66.08 & 65.05 \\
     & STS12 & 46.87 & 43.42 & 46.41 & 47.46 & 48.59 & 53.37 \\
     & STS13 & 63.88 & 61.13 & 64.38 & 68.72 & 70.58 & 72.99 \\
     & STS14 & 58.22 & 55.69 & 56.54 & 60.26 & 56.04 & 63.46 \\
     & STS15 & 72.71 & 70.46 & 68.91 & 74.46 & 70.29 & 74.43 \\
     & STS17 & 35.15 & 33.93 & 27.34 & 32.40 & 34.85 & 35.76 \\
     & STS22.v2 & 27.02 & 26.17 & 29.64 & 24.05 & 28.95 & 33.46 \\
     & STSBenchmark & 59.70 & 57.12 & 61.72 & 64.51 & 64.83 & 65.73 \\
    \midrule
    Summ. & SummEval.v2 & 33.28 & 33.77 & 31.38 & 32.29 & 31.35 & 26.23 \\
    \bottomrule
  \end{tabular}
\end{table}


\end{document}